\theoremstyle{remark}
\begin{document}

\newtheorem{cond}{Condition}
\newtheorem{prop}{Proposition}

\title{Existence of maximally correlated states}
\author{S. Camalet}
\affiliation{Sorbonne Universit\'e, CNRS, Laboratoire de Physique
 Th\'eorique de la Mati\`ere Condens\'ee, LPTMC, F-75005, 
Paris, France}

\begin{abstract}
A measure of total correlations cannot increase under deterministic 
local operations. We show that, for any number of systems, 
this condition alone does not guarantee the existence of maximally 
correlated states. Namely, there is no state that simultaneously 
maximizes all the measures satisfying it. If, in addition, the measures 
do not increase with probability unity under local measurements, 
then such states exist for two systems. They are the maximally 
entangled states. For a larger number of systems, it depends on 
their Hilbert space dimensions.
\end{abstract} 

\maketitle

\section{Introduction}

In order to apprehend quantum entanglement, many measures 
have been introduced \cite{HHHH}. Some of them have clear 
operational meanings, such as distillable entanglement or 
entanglement cost \cite{BBPSSW,BDSW,PV}, others are readily 
computable, such as negativity \cite{ZHSL,ViWe}. They all 
have an essential property that makes them proper 
entanglement quantifiers, which is the monotonicity under local 
operations and classical communication \cite{HHHH,BDSW,VPRK,V}. 
More precisely, when a multipartite state is deterministically changed 
into another using such means, the measure does not have 
a higher value for the obtained state than for the original one. 
A measure satisfying this requirement is called entanglement 
monotone. Similar monotones have been defined for other quantum 
resources, such as nonuniformity, coherence, or asymmetry 
\cite{GMNSH,NBCPJA,PCBNJ,BCP,LZYDL,BSFPW,KPTAJ,K}.

A multipartite state is said to be not more entangled than another 
when the former can be deterministically obtained from 
the latter by local operations and classical communication, or, 
equivalently, when any entanglement monotone is not larger 
for the former than for the latter. This defines a partial order. 
Two states can be incomparable, which means that, in going from 
one to the other, some entanglement monotones increase 
whereas others decrease. However, there are states that 
simultaneously minimize all the entanglement monotones. They 
are the separable states which are the mixtures of product 
states \cite{W}. Moreover, for bipartite systems, there are 
so-called maximally entangled states which are not less entangled 
than any other state on the same Hilbert space \cite{PV,LZFFL,qp}. 

The entanglement ordering of multipartite states is based on 
deterministic operations. Performing a local measurement and 
selecting a specific outcome can result in a state more or less 
entangled than the initial one, with or without classical 
communication. So, the values of an entanglement monotone for 
the state before the measurement and for the states after it are 
not necessarily related in a particular way. However, many familiar 
entanglement monotones are nonincreasing on average under local 
measurements \cite{HHHH,V}. This means that the average of 
the postmeasurement amounts of entanglement, each weighted 
with the corresponding measurement probability, is not larger 
than the premeasurement amount of entanglement. Other quantum 
resource monotones are nonincreasing on average under 
appropriate measurements 
\cite{NBCPJA,PCBNJ,BCP,LZYDL,BSFPW,KPTAJ,K}. 
 
In this paper, we address the issue of ordering multipartite states 
according to total correlations. We are, in particular, concerned 
with the possible existence of maximally correlated states, which 
is of interest for at least two reasons. First, for pure bipartite 
states, the only form of correlation is entanglement \cite{MBCPV}, 
and so maximally entangled states may be maximally correlated 
in some sense in this case. Second, it has been shown that 
the internal entanglement of a bipartite system is constrained 
by its total correlations with another system \cite{qp}. The found 
upper bound always decreases as the correlations increase, 
but, depending on the measure of correlations used, the internal 
entanglement necessarily vanishes for maximal correlations or not. 

The correlations between systems that do not influence each 
other cannot increase. The time evolution of independent 
systems is given by deterministic local operations. Thus, 
a measure of total correlations cannot increase under such 
operations \cite{HV,BM}. We name such measures as correlation 
monotones. We introduce them and show their basic properties 
in Sec.\ref{Cm}. As we will see in Sec.\ref{Sobcm}, if no other 
requirement is imposed, there is no maximally correlated state on 
any Hilbert space, in stark contrast with the case of entanglement. 
In Sec.\ref{Comcs}, it is shown that there are maximally correlated 
states for bipartite systems when only correlation monotones 
which are nonincreasing on average under local measurements 
are considered. In fact, the same result follows from a weaker 
condition, namely, that the measures do not increase with probability 
unity under local measurements. Correlation monotones that do 
not fulfill this last condition are discussed in Sec.\ref{Cmmpes}. 
In Sec.\ref{IHsd}, we consider the case of more than two systems, 
for which the existence or not of maximally correlated states 
depends on the Hilbert space dimensions of the systems. Finally, 
in Sec.\ref{C}, we summarize our results.

\section{Correlation monotones}\label{Cm}

For $N$ systems with Hilbert spaces ${\cal H}_n$, a local operation 
is characterized by Kraus operators of the form 
\begin{equation}
\tilde K_q=K_q \otimes I_1^>, \;\; I_{N}^< \otimes K_q,\;\; 
\text{or  }I^<_n \otimes K_q \otimes I_n^> , \label{Kq}
\end{equation}
where $I^<_n=\otimes_{m<n}I_m$ and 
$I^>_n=\otimes_{m>n}I_m$ with $I_n$ the identity operator 
on ${\cal H}_n$. The local operators $K_q$ map ${\cal H}_n$ into 
an Hilbert space ${\cal H}'_n$ possibly different from ${\cal H}_n$. 
They form a complete set of Kraus operators, namely, they satisfy 
$\sum_q K_q^\dag  K_q=I_n$. If ${\cal H}'_n$ is the same for 
every $q$, a deterministic operation $\Lambda$ can be defined. 
It changes the state $\rho$ of the $N$ systems into 
$\Lambda(\rho)=\sum_q \tilde K_q \rho \tilde K_q^\dag$. 
As discussed in the introduction, we consider measures $C$, 
termed correlation monotones, that obey the following Condition.
\begin{cond}\label{cond1}
The measure $C$ is nonincreasing under deterministic local 
operations, i.e., $C(\Lambda(\rho)) \le C(\rho)$ for any state 
$\rho$ and deterministic local operation $\Lambda$. 
\end{cond} 
A measure of total correlations can, for example, be a mimimal 
distance to the set of product states, 
$C(\rho)=\inf_{\{\delta_n\}_n} D(\rho,\otimes_n \delta_n)$, 
where the infimum is taken over all the density operators 
of the considered systems. Provided $D$ satisfies 
$D(\Lambda(\omega),\Lambda(\omega'))\le D(\omega,\omega')$ 
for any quantum operation $\Lambda$, $C$ is a correlation 
monotone. Some possible choices for $D$ are the relative entropy, 
the Hellinger distance, or the Bures distance 
\cite{qp,MPSVW,ACMBMAV,BCLA}. The above definition gives 
the total mutual information
\begin{equation}
I(\rho)=\sum_{n=1}^N S\big(\rho^{(n)}\big)-S(\rho) , \label{I}
\end{equation}
for the relative entropy \cite{MPSVW}. In this expression, $S$ is 
the von Neumann entropy and $\rho^{(n)}=
\operatorname{tr}_{\otimes_{m\ne n}{\cal H}_m} \rho$ is 
the state of system $n$, where $\operatorname{tr}_{\cal H}$ 
denotes the partial trace over the Hilbert space ${\cal H}$. 

Any two product states can be transformed into one another by 
local operations and so a correlation monotone $C$ assumes 
the same value for all the product states. Moreover, $C(\rho)$ 
cannot be smaller than this value since any state $\rho$ can be 
changed into a product state by local operations. It is usually set 
to zero as there is no correlation between systems in product 
states. The above mentioned measures of total correlations 
vanish for product states. Another characteristic of correlation 
monotones is that they do not depend explicitly on the Hilbert 
spaces ${\cal H}_n$, in particular on their dimensions. This follows 
from the Proposition below.
\begin{prop}\label{prop1}
Consider any $r\times r$ Hermitian matrix $M$ with trace unity, 
$r$ integer vectors ${\bf i}_s=(i_{s,n})_{n=1}^N$, and 
correlation monotone $C$.

The value $C(\rho)$ is the same for all the states 
$\rho=\sum_{s,t=1}^r M_{s,t} 
| {\bf i}_s \rangle  \langle {\bf i}_t|$
where $M_{s,t}$ denotes the elements of $M$ and 
$| {\bf i}_s \rangle=\otimes_{n=1}^N | i_{s,n} \rangle_n$ 
with $| i \rangle_n$ any orthonormal states of any Hilbert 
space ${\cal H}_n$.
\end{prop}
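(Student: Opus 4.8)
The plan is to use Condition~\ref{cond1} to show that any two states of the stated form can be transformed into one another by deterministic local operations, which forces their $C$-values to coincide. Fix $M$ and the integer vectors ${\bf i}_s$, and consider two choices of orthonormal states: $|i\rangle_n$ in ${\cal H}_n$, giving $\rho=\sum_{s,t}M_{s,t}|{\bf i}_s\rangle\langle{\bf i}_t|$, and $|i\rangle'_n$ in ${\cal H}'_n$, giving the state $\rho'$ built in the same way from $|{\bf i}_s\rangle'=\otimes_n|i_{s,n}\rangle'_n$. Since the construction is symmetric in the two choices, it suffices to exhibit deterministic local operations sending $\rho$ to $\rho'$: the reverse direction then follows by exchanging primed and unprimed objects, and the two inequalities $C(\rho')\le C(\rho)$ and $C(\rho)\le C(\rho')$ give the claim.

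First I would build, for each system $n$, a single deterministic local operation $\Lambda_n$ that relabels its local basis. Let ${\cal I}_n=\{i_{s,n}:1\le s\le r\}$ be the finite set of labels occurring in the $n$th slot and $V_n\subseteq{\cal H}_n$ the subspace they span. I take the Kraus operators $K_0=\sum_{i\in{\cal I}_n}|i\rangle'_n\langle i|_n$ together with $K_\alpha=|i_0\rangle'_n\langle e_\alpha|$, where $i_0\in{\cal I}_n$ is fixed and $\{|e_\alpha\rangle\}$ is an orthonormal basis of the orthogonal complement of $V_n$ in ${\cal H}_n$. Then $\sum_q K_q^\dag K_q$ is the sum of the projectors onto $V_n$ and onto its complement, i.e.\ the identity $I_n$, so these operators define a deterministic local operation of the form~\eqref{Kq} mapping ${\cal H}_n$ into ${\cal H}'_n$. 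Applied to a state whose $n$th factor is supported on $V_n$, the operators $K_\alpha$ contribute nothing and $\Lambda_n$ acts simply as $|i\rangle_n\mapsto|i\rangle'_n$ in that factor.

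I would then apply $\Lambda_1,\dots,\Lambda_N$ in succession to $\rho$. Because $\rho$ is supported on $\otimes_n V_n$, and each $\Lambda_n$ leaves the already-relabelled factors untouched while mapping the $n$th factor into the span of the $|i\rangle'_n$, after the $n$th step the intermediate state is still supported on the relevant subspace of the remaining systems, so the next operation acts cleanly; the final state is exactly $\rho'$. Applying Condition~\ref{cond1} at each of the $N$ steps yields $C(\rho')\le C(\rho)$, and the mirror-image sequence of operations gives $C(\rho)\le C(\rho')$, hence $C(\rho)=C(\rho')$.

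The step I expect to require the most care is the construction and application of $\Lambda_n$: one must verify that the completion operators $K_\alpha$ genuinely achieve $\sum_q K_q^\dag K_q=I_n$ while annihilating the state, and that the bookkeeping across the $N$ successive relabellings stays consistent, namely that after relabelling systems $1,\dots,n-1$ the state remains supported on $V_n$ in the $n$th factor so that $\Lambda_n$ reduces to the clean relabelling. Once these points are settled, the conclusion is immediate from Condition~\ref{cond1}, and it is this argument that justifies the preceding remark that a correlation monotone does not depend explicitly on the Hilbert spaces ${\cal H}_n$.
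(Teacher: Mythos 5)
Your proof is correct, but it takes a genuinely different route from the paper's. You construct, for each system $n$, an explicit relabelling channel: the partial isometry $K_0=\sum_{i\in{\cal I}_n}|i\rangle'_n\langle i|_n$ completed by the rank-one operators $K_\alpha=|i_0\rangle'_n\langle e_\alpha|$, so that $\sum_q K_q^\dag K_q=P_{V_n}+P_{V_n^\perp}=I_n$, while on states supported on $V_n$ the completion operators annihilate and the channel acts as $|i\rangle_n\mapsto|i\rangle'_n$; iterating over the $N$ systems and applying Condition \ref{cond1} in both directions (by the symmetry of the construction) yields the equality. Your bookkeeping is sound: after relabelling systems $1,\dots,n-1$ the intermediate state is still, in its $n$th factor, supported on $V_n$, so each step acts cleanly, and the Kraus completeness check you flagged does go through. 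The paper avoids building any custom channel: it forms the doubled spaces $\tilde{\cal H}_n={\cal H}_{1,n}\otimes{\cal H}_{2,n}$, observes that $C(\rho_1)=C(\rho_1\otimes|{\bf i}_1\rangle_2{_2\langle}{\bf i}_1|)$ because appending a local pure state and taking a local partial trace are deterministic local operations inverse to each other at the level of $C$, and then carries $\rho_1$ onto $\rho_2$ by local swap unitaries $U_n$, which are reversible. What the paper's route buys is that every operation used (ancilla attachment, unitary, partial trace) is manifestly a deterministic local operation, so equality of $C$ values is immediate at each step with no completeness relation to verify; what your route buys is a more elementary and constructive statement, exhibiting directly the deterministic local operations that map either state of the given form onto the other, which is also the content later exploited in Proposition \ref{prop2}(i).
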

\begin{proof}
Consider the states $\rho_k=\sum_{s,t=1}^r M_{s,t} 
| {\bf i}_s \rangle_k {_k\langle} {\bf i}_t|$ where $k=1$ or $2$ 
and $| {\bf i}_s \rangle_k
=\otimes_{n=1}^N | i_{s,n} \rangle_{k,n}$ with 
$| i \rangle_{k,n}$ any orthonormal states of any Hilbert space 
${\cal H}_{k,n}$. Let us define the Hilbert spaces 
$\tilde {\cal H}_n={\cal H}_{1,n} \otimes {\cal H}_{2,n}$. 
The states $\rho_1$ and $\tilde \rho_1= 
\rho_1 \otimes| {\bf i}_1 \rangle_2 {_2\langle} {\bf i}_1 |$
can be changed into each other by the local transformations 
$\rho\mapsto
\rho\otimes | i_{1,n} \rangle_{2,n} {_{2,n}\langle}i_{1,n}|$ 
and $\rho\mapsto\operatorname{tr}_{{\cal H}_{2,n}}\rho$, 
and so $C(\tilde \rho_1)=C(\rho_1)$. The state $\tilde \rho_1$ 
is changed into $\tilde \rho_2= 
| {\bf i}_1 \rangle_1 {_1\langle} {\bf i}_1 | \otimes \rho_2$ 
by applying the $N$ operations with Kraus operators 
$K^{(n)}=\otimes_{m\ne n}\tilde I_m \otimes U_n$ 
where $\tilde I_n$ is the identity operator on $\tilde {\cal H}_n$ 
and $U_n$ is a unitary operator on $\tilde {\cal H}_n$ such 
that $U_n| i \rangle_{1,n}|j \rangle_{2,n} 
=| j \rangle_{1,n}| i \rangle_{2,n}$. The state $\tilde \rho_2$ 
is transformed back into $\tilde \rho_1$ by applying the $N$ 
operations with Kraus operators $(K^{(n)})^\dag$, and hence 
$C(\tilde \rho_2)=C(\tilde \rho_1)$, which leads to 
$C(\rho_2)=C(\rho_1)$ and finishes the proof.
\end{proof}
For pure bipartite states, Proposition \ref{prop1} yields 
the following. Any such state $|\psi \rangle$ can be written as
$|\psi \rangle=\sum_{i=1}^r 
\sqrt{\lambda_i} | i \rangle_1 \otimes | i \rangle_2$, where $r$ 
is its Schmidt rank, $\lambda_i$ denotes its Schmidt coefficients, 
and $| i \rangle_n$ are orthonormal states of ${\cal H}_n$. 
The corresponding $r\times r$ matrix $M$ is given by 
$M_{s,t}=\sqrt{\lambda_s\lambda_t}$ and the $r$ vectors 
${\bf i}_s$ by ${\bf i}_s=(s,s)$. These vectors are the same for all 
the states with rank $r$. Thus, $C(|\psi \rangle\langle \psi|)$ 
depends only on the Schmidt rank and coefficients of $|\psi \rangle$.

\section{State ordering based on the correlation monotones}
\label{Sobcm}

We say that $\rho$ is not more correlated than $\rho'$, according 
to Condition \ref{cond1}, if and only if $C(\rho) \le C(\rho')$ for 
any correlation monotone $C$. We denote this relation by 
$\rho \prec_1 \rho'$. It is a preorder, i.e., $\rho \prec_1 \rho$ and 
if $\rho \prec_1 \rho'$ and $\rho' \prec_1 \rho''$ then 
$\rho \prec_1 \rho''$. States $\rho$ and $\rho'$ such that 
$\rho \prec_1 \rho'$ and $\rho' \prec_1 \rho$, which means that 
$C(\rho)=C(\rho')$ for any correlation monotone $C$, are said to 
be equally correlated for $\prec_1$. This is, for instance, the case 
of two pure bipartite states with identical Schmidt coefficients. 
States $\rho$ and $\rho'$ such that $C(\rho')-C(\rho)$ is positive 
for some correlation monotones $C$ and negative for others are 
said to be incomparable for $\prec_1$. The ordering $\prec_1$ can 
also be characterized as follows.
\begin{prop}\label{prop2}
Let $\rho$ and $\rho'$ be two states on any $N$-partite Hilbert 
spaces. The three following statements are equivalent. 

\noindent i) some deterministic local operations change $\rho'$ 
into $\rho$ ,

\noindent ii) $\rho \prec_1 \rho'$ ,

\noindent iii) $C(\rho) \le C(\rho')$ for any correlation monotone 
$C$ vanishing only for product states.
\end{prop}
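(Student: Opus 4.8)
The plan is to establish the equivalence through the cycle i) $\Rightarrow$ ii) $\Rightarrow$ iii) $\Rightarrow$ i). The first implication is immediate: if a finite sequence of deterministic local operations sends $\rho'$ to $\rho$, then applying Condition \ref{cond1} once per operation gives $C(\rho)\le C(\rho')$ for every correlation monotone $C$, which is precisely $\rho\prec_1\rho'$. The second implication is also immediate, since the monotones appearing in iii) form a subclass of all correlation monotones: an inequality that holds for every $C$ in particular holds for those vanishing only on product states. Everything therefore hinges on iii) $\Rightarrow$ i), which I would prove in contrapositive form. Assuming that no deterministic local operations turn $\rho'$ into $\rho$, I would exhibit a single correlation monotone $C$ that vanishes only for product states and yet satisfies $C(\rho)>C(\rho')$, thereby negating iii).

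The first step toward this witness is to control the set $R(\rho')$ of states reachable from $\rho'$ by deterministic local operations. Because operations on distinct systems commute and successive operations on the same system compose into a single channel, any reachable state on the fixed output space $\otimes_n{\cal H}_n$ (the Hilbert space of $\rho$) has the form $(\Lambda_1\otimes\cdots\otimes\Lambda_N)(\rho')$, with each $\Lambda_n$ a deterministic channel from ${\cal H}'_n$ to ${\cal H}_n$. The set of such tuples of channels is compact and the map to the output state is continuous, so the reachable set on $\otimes_n{\cal H}_n$ is compact, hence closed. By hypothesis $\rho$ lies outside it and is thus separated from it by a positive distance. Moreover $\rho$ cannot be a product state, since product states are reachable from any state; hence $\rho$ is also separated by a positive distance from the closed set of product states.

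With these two separations in hand, I would assemble the witness as $C=I+B$. Here $I$ is the total mutual information of Eq.~\eqref{I}, which is a correlation monotone vanishing exactly on product states. The second piece is $B(\sigma)=\sup_{\tau\in R(\sigma)}\phi(\tau)$, where $\phi\ge0$ is a continuous bump, centered at $\rho$ with height $\phi(\rho)>I(\rho')$ and with support small enough to avoid both the reachable set of $\rho'$ and all product states, and extended by zero to every other state. Since $R(\Lambda(\sigma))\subseteq R(\sigma)$ for any deterministic local operation $\Lambda$, the quantity $B$ is nonincreasing under such operations, so $C$ obeys Condition \ref{cond1}. As both terms are nonnegative, $C$ vanishes only where $I$ does, namely on product states, for which $B$ also vanishes because $R(\sigma)$ then contains only product states, on which $\phi$ is zero. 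Finally $C(\rho')=I(\rho')$, since $\phi$ vanishes on $R(\rho')$, whereas $C(\rho)\ge I(\rho)+\phi(\rho)>I(\rho')=C(\rho')$, which contradicts iii) and closes the cycle.

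The main obstacle I anticipate lies entirely in iii) $\Rightarrow$ i), and within it in the two technical points that make the construction legitimate: first, reducing arbitrary compositions of single-system operations to a single product channel, so that the reachable set on a fixed Hilbert space is compact and therefore genuinely separated from an unreachable $\rho$; and second, verifying that $B$ is a bona fide correlation monotone and that $C=I+B$ vanishes on product states and on nothing else. The compactness reduction has the further virtue of sidestepping any need to compare states across different Hilbert spaces, since it suffices to fix the output dimensions to those of $\rho$.
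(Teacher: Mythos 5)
Your proposal is correct, and its skeleton --- the cycle of implications, with the hard direction iii) $\Rightarrow$ i) proved in contrapositive by exhibiting a witness monotone built from the set $R(\rho')$ of states reachable from $\rho'$ --- is the same as the paper's; the difference lies entirely in the witness. The paper's witness is a bare step function: it assigns $0$ to product states, a value $c'>0$ to non-product states in $R(\rho')$, and a larger value $c$ to everything else (with a degenerate two-level version when $\rho'$ is itself a product state). Its monotonicity follows from transitivity of reachability and the fact that local operations map product states to product states, so no topology is needed and the argument is insensitive to dimension. Your witness $C=I+B$ requires genuinely more machinery: the product-channel reduction, compactness of the set of channels into the fixed output space (a finite-dimensional fact; it would fail in infinite dimensions), closedness of the set of product states, and a positive-distance separation to fit the bump $\phi$. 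All of this is correct as you state it, but it is avoidable: nothing in your argument uses the continuity of $\phi$, so you may take $\phi$ to be supported on the single point $\rho$ with height exceeding $I(\rho')$; then $B(\sigma)$ equals that height when $\rho\in R(\sigma)$ and $0$ otherwise, monotonicity still reads $R(\Lambda(\sigma))\subseteq R(\sigma)$, and the entire compactness discussion --- which you flag as the main obstacle --- evaporates, along with the need to invoke separation from the product states. What your construction buys in exchange is the elimination of the paper's case split on whether $\rho'$ is a product state, since the term $I$, which vanishes exactly on product states, automatically enforces the requirement in iii) that the witness vanish only there.
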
 
\begin{proof}
We first show that (i) implies (ii). If $\rho'$ can be changed 
into $\rho$ by deterministic local operations, there are 
$N$ such operations $\Lambda^{(n)}$, each corresponding 
to a system $n$, such that 
$\rho=\circ_{m=1}^N \Lambda^{(m)} (\rho')$. 
For any $C$ obeying Condition 1, 
$C[\Lambda^{(N)}(\rho')]\le C(\rho')$ and 
$C[\circ_{m=n-1}^N \Lambda^{(m)} (\rho')]
\le C[\circ_{m=n}^N \Lambda^{(m)} (\rho')]$ 
for $n=2, \ldots, N$, and so $C(\rho)\le C(\rho')$.

The implication (ii)$\Rightarrow$(iii) follows directly from 
the definition of the preorder $\prec_1$.

We finally show that not (i) implies not (iii). Let $\rho$ and 
$\rho'$ be two states such that $\rho'$ cannot be changed 
into $\rho$ by deterministic local operations. Consequently, 
$\rho$ is not a product state. Assume first that $\rho'$ is a product 
state, and define a function $C$ as follows. For any state 
$\tilde \rho$, $C(\tilde \rho)=0$ if $\tilde \rho$ is a product state 
and $C(\tilde \rho)=c > 0$ otherwise. In particular, $C(\rho')=0$ 
and $C(\rho)=c$. Consider any state $\tilde \rho$ and 
any deterministic local operation $\Lambda$. If $C(\tilde \rho)=c$ 
then $C[\Lambda(\tilde \rho)] \le C(\tilde \rho)$ is trivially fulfilled. 
If $C(\tilde \rho)=0$, then $\tilde \rho$ is a product state, and so 
is $\Lambda(\tilde \rho)$, and hence $C[\Lambda(\tilde \rho)]=0$. 
Consequently, there is a correlation monotone $C$ vanishing only 
for product states such that $C(\rho) > C(\rho')$. 

Assume now that $\rho'$ is not a product state. Since any state can 
be transformed into any product state by local operations, 
a function $C$ can be defined as follows. For any state $\tilde \rho$, 
$C(\tilde \rho)=0$ if $\tilde \rho$ is a product state, 
$C(\tilde \rho)=c'>0$ if $\rho'$ can be changed into $\tilde \rho$ by 
local operations and $\tilde \rho$ is not a product state, and 
$C(\tilde \rho)=c > c'$ otherwise. In particular, $C(\rho')=c'$, 
$C(\rho)=c$, and $C$ vanishes only for product states. 
Consider any state $\tilde \rho$ and any deterministic 
local operation $\Lambda$. If $C(\tilde \rho)=c$ then 
$C[\Lambda(\tilde \rho)] \le C(\tilde \rho)$ is trivially fulfilled. 
If $C(\tilde \rho)=c'$, then $\rho'$ can be changed into $\tilde \rho$ 
by local operations, and hence also into $\Lambda(\tilde \rho)$, 
which gives $C[\Lambda(\tilde \rho)] \le c'$. If $C(\tilde \rho)=0$, 
then $\tilde \rho$ is a product state, and so is $\Lambda(\tilde \rho)$, 
and hence $C[\Lambda(\tilde \rho)]=0$. 
\end{proof} 
For measures of total correlations, the class of correlations 
monotones which vanish only for product states is of interest. 
An example is the total mutual information \eqref{I}. The above 
Proposition shows that adding this restriction does not change 
the ordering of the multipartite states. 

As seen above, any product state is not more correlated, according 
to Condition \ref{cond1}, than any other state. One can wonder 
whether, similarly, there are states that simultaneously maximize 
all the correlation monotones. This may be meaningful only for 
a given Hilbert space ${\cal H}=\otimes_{n=1}^N {\cal H}_n$. 
For instance, for spaces ${\cal H}_n$ of identical dimension $d$, 
the maximum value of the total mutual information \eqref{I} on 
${\cal H}$, $N \ln d$, is strictly increasing with $d$. The following 
general result can be shown.
\begin{prop}\label{prop3}
For any $N$ Hilbert spaces ${\cal H}_n$, there is no maximally 
correlated state on ${\cal H}=\otimes_{n=1}^N {\cal H}_n$ 
for $\prec_1$, i.e., no state $\rho'$ on ${\cal H}$ such that 
$\rho \prec_1 \rho'$ for any $\rho$ on ${\cal H}$.
\end{prop}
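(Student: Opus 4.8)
The plan is to prove the contrapositive of maximality through reachability. By the equivalence (i)$\Leftrightarrow$(ii) of Proposition \ref{prop2}, a state $\rho'$ is maximally correlated on ${\cal H}$ if and only if every state $\rho$ on ${\cal H}$ can be obtained from $\rho'$ by deterministic local operations. So it suffices to exhibit, for an arbitrary $\rho'$, a single state on ${\cal H}$ that cannot be reached from it. I would take this witness to be a pure state, and I would exploit the fact that, since the local operations carry no classical communication, composing one operation per system yields simply a tensor product of local channels $\otimes_n \Phi_n$, with product Kraus operators $\otimes_n M^{(n)}_{k_n}$.

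First I would establish two rigidity facts about pure targets, starting with $N=2$. (a) If $\otimes_n \Phi_n$ sends a pure state to a pure state, then, writing the product Kraus operators as $M_k \otimes N_l$ and the states as matrices, purity of the output forces each term to be proportional to the target; comparing two such terms forces the local operators to be proportional to unitaries on the relevant supports, so input and output have the same Schmidt coefficients. Hence two pure states with different Schmidt spectra are mutually incomparable for $\prec_1$. (b) If the pure target has full Schmidt rank, the local operators producing any nonzero Kraus term must be invertible; acting with such an operator on the support of $\rho'$ then forces $\sqrt{\rho'}$ to have one-dimensional range, i.e.\ $\rho'$ is itself pure.

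With these, the equal-dimension bipartite case closes at once: a maximal $\rho'$ would have to reach every full-Schmidt-rank pure state, hence by (b) be pure, and then by (a) share the Schmidt spectrum of each such target. Since full-rank pure states with two different Schmidt spectra exist, $\rho'$ would have to carry two different spectra simultaneously, a contradiction. As an independent check, the total mutual information \eqref{I} already forces any maximal $\rho'$ into the maximal-entanglement class, which by (a) reaches, among pure states, only equally entangled ones and therefore misses the partially entangled states.

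The main obstacle is the general case of unequal local dimensions and $N>2$. There fact (b) degrades: a pure state has Schmidt rank at most $\min_n d_n$ across a given cut, so the local operators on the larger factors need not be invertible, and reaching such a target no longer forces $\rho'$ to be pure. I would handle this by choosing the witness pure and of maximal Schmidt rank across every bipartition at once, and by running the matrix argument cut by cut, so that the confinement of $\sqrt{\rho'}$ imposed by each cut combines to restrict $\rho'$ enough that two inequivalent such witnesses cannot both be reachable. Pinning down precisely how much structure the multi-cut constraints force on $\rho'$, and verifying that inequivalent maximal-rank pure states stay mutually unreachable in every dimension configuration, is the delicate step; the remainder is the bookkeeping ensured by Proposition \ref{prop1}, namely that the value of every correlation monotone on these states depends only on their Schmidt data.
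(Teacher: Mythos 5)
Your reduction via Proposition~\ref{prop2} and your treatment of the equal-dimension bipartite case are sound: for $d_1=d_2$, reaching a full-Schmidt-rank pure witness does force $\rho'$ to be pure, and the term-by-term Kraus rigidity then reproduces, in essence, the paper's Proposition~\ref{prop4}. One local correction: your fact (a) is overstated. As written (``two pure states with different Schmidt spectra are mutually incomparable for $\prec_1$'') it is false, since every pure state reaches every product state, whose spectrum is different; the rigidity conclusion requires the target's Schmidt rank to equal the input's, which is exactly the hypothesis of Proposition~\ref{prop4}. This is harmless in your application, because your witnesses have full rank, but it should be stated with that hypothesis.

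The genuine gap is the case you defer --- unequal local dimensions and $N>2$ --- and the plan you sketch for it would not close. The failure of your fact (b) there is not a technicality that more careful ``multi-cut confinement'' can patch: when $d_2\ge 2d_1$, the mixed states of eq.~\eqref{mps} reach, by the deterministic local operation with Kraus operators $I_1\otimes\sum_{i=1}^{d_1}|1,i\rangle_2\,{}_2\langle q,i|$, pure states with \emph{any} prescribed rank-$d_1$ Schmidt spectrum (take \eqref{mps} with coefficients $\lambda_i$ in place of $1/d_1$); so there exist mixed $\rho'$ that reach every one of your maximal-rank witnesses, and no contradiction can follow from purity plus spectrum rigidity alone. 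What is needed is a structure theorem identifying which states can reach all such witnesses; the paper gets this from Ref.~\cite{LZFFL} (the maximizers of the entanglement of formation are precisely the states \eqref{mps}) and then still needs the extra step $\rho'\prec_1|\tilde 1\rangle\langle\tilde 1|$ to land back in the pure case covered by Proposition~\ref{prop4}. For $N>2$ the paper does something your proposal has no counterpart for: it builds correlation monotones $C_{\{x,y\}}$ from the entanglement of formation of the pairwise reduced states and derives a monogamy contradiction --- maximality would force every $\rho'_{\{x,y\}}$ to maximize $E_f$, yet the \eqref{mps} form for one pair $\{x,y\}$ forces $\rho'_{\{x,z\}}=d_x^{-1}I_x\otimes(\cdots)$, i.e., $E_f(\rho'_{\{x,z\}})=0$. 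Without these two ingredients, or substitutes for them, your argument establishes the proposition only for $N=2$ with $d_1=d_2$.
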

\begin{proof}
Assume there is $\rho'$ on ${\cal H}$ such that 
$\rho \prec_1 \rho'$ for any $\rho$ on ${\cal H}$. We first treat 
the case $N>2$. Define the measures $C_{\{x,y\}}$, where 
$\{x,y\} \subset \{1,\ldots,N\}$, as follows. Consider any 
$N$-partite state $\rho=\sum_{s,t} M_{s,t} 
| {\bf i}_s \rangle  \langle {\bf i}_t|$
where $| {\bf i}_s \rangle=\otimes_{n=1}^N | i_{s,n} \rangle_n$ 
with $| i \rangle_n$ any orthonormal states of any Hilbert space 
${\cal H}'_n$, and the corresponding reduced density operator 
for any two systems $x$ and $y$, 
$$\rho_{\{x,y\}}=\sum_{s,t} M_{s,t} \prod_{n\ne x,y} 
\delta_{i_{s,n} , i_{t,n}}
| i_{s,x} , i_{s,y} \rangle\langle  i_{t,x}, i_{t,y}| , $$
where $| i_{s,x} , i_{s,y} \rangle
=| i_{s,x} \rangle_x \otimes | i_{s,y} \rangle_y$,
and let $C_{\{x,y\}}(\rho)=E_f(\rho_{\{x,y\}})$, where $E_f$ is 
the entanglement of formation, which is a correlation monotone. 
The above expression for $\rho_{\{x,y\}}$ makes apparent that, 
due to Proposition \ref{prop1}, $C_{\{x,y\}}(\rho)$ depends only 
on the matrix elements $M_{s,t}$ and the integer vectors 
${\bf i}_s$. For deterministic local operations $\Lambda$ 
acting on system $x$ or system $y$, with local Kraus 
operators $K_q$, $C_{\{x,y\}}[\Lambda(\rho)]
=E_f[\tilde \Lambda(\rho_{\{x,y\}})]$ 
where $\tilde \Lambda$ is the operation with Kraus operators 
$K_q \otimes I_y$ or $I_x \otimes K_q$, respectively. For all 
the other deterministic local operations, 
$C_{\{x,y\}}[\Lambda(\rho)]=E_f(\rho_{\{x,y\}})$. 
Consequently, $C_{\{x,y\}}$ is a correlation monotone, and 
hence $C_{\{x,y\}}(\rho) \le C_{\{x,y\}}(\rho')$ for 
any $\rho$ on ${\cal H}$. 

For any $\tilde \rho$ on ${\cal H}_x \otimes {\cal H}_y$, 
the above inequality with $\rho=
\otimes_{n\ne x,y} I_n \otimes \tilde \rho/\Pi_{n\ne x,y} d_n$, 
where $d_n$ is the dimension of ${\cal H}_n$, shows that 
$E_f(\tilde \rho) \le E_f(\rho'_{\{x,y\}})$. In other words, 
$\rho'_{\{x,y\}}$ maximizes $E_f$ on 
${\cal H}_x \otimes {\cal H}_y$. Thus, assuming, without loss 
of generality, that $d_x \le d_y$, it is given by eq.\eqref{mps} 
with orthonormal states  $|i \rangle_x$ and $|q,i \rangle_y$ 
of ${\cal H}_x$ and ${\cal H}_y$, respectively \cite{LZFFL}, 
and so $\rho'=\sum_{q,q'} |\tilde q \rangle \langle \tilde q' | 
\otimes \Omega_{q,q'}$, where $\Omega_{q,q'}$ are operators 
on $\otimes_{n\ne x,y} {\cal H}_n$ such that 
$\operatorname{tr} \Omega_{q,q'}=p_q \delta_{q,q'}$. Consider 
now $\rho'_{\{x,z\}}$ where $z \ne y$. From the above expression 
for $\rho'$, it follows that $\rho'_{\{x,z\}}=d_x^{-1} I_x \otimes 
\operatorname{tr}_{\otimes_{n \ne x,y,z} {\cal H}_n} 
\sum_q \Omega_{q,q}$, and thus $E_f(\rho'_{\{x,z\}})=0$, 
which contradicts that $E_f$ is maximum on 
${\cal H}_x \otimes {\cal H}_z$ for $\rho'_{\{x,z\}}$. 

In the particular case $N=2$, since $\rho'$ maximizes 
the correlation monotone $E_f$ on ${\cal H}$, it is given by 
eq.\eqref{mps} assuming, without 
loss of generality, that $d_1 \le d_2$ \cite{LZFFL}. 
The operation with Kraus operators $\sqrt{p_q} I_1 \otimes U_q$ 
where $U_q$ is a unitary operator on ${\cal H}_2$ such that 
$|q, i \rangle_2=U_q|1,i \rangle_2$ 
transforms $|\tilde 1 \rangle \langle \tilde 1 |$ into $\rho'$, 
and so $\rho' \prec_1 |\tilde 1 \rangle \langle \tilde 1 |$. 
This implies that $|\tilde 1 \rangle\langle \tilde 1|$ is a pure 
maximally correlated state on ${\cal H}$ for $\prec_1$, which 
is not possible, due to Proposition \ref{prop4}.
\end{proof}
To prove the non-existence of maximally correlated states for 
$\prec_1$ in the particular case $N=2$, the Proposition below is 
used.
\begin{prop}\label{prop4}
Two pure bipartite states with the same Schmidt rank have identical 
Schmidt coefficients or are incomparable for $\prec_1$. 
\end{prop}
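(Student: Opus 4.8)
The plan is to combine Proposition \ref{prop2} with the rigidity of pure-to-pure transformations under deterministic local operations. By Proposition \ref{prop2}, the relation $|\phi\rangle\langle\phi| \prec_1 |\psi\rangle\langle\psi|$ holds if and only if some deterministic local operations change $|\psi\rangle\langle\psi|$ into $|\phi\rangle\langle\phi|$. For $N=2$, such a transformation is a product channel $\mathcal{E}_1 \otimes \mathcal{E}_2$, since local operations on the two systems commute. It therefore suffices to prove the following: if $(\mathcal{E}_1 \otimes \mathcal{E}_2)(|\psi\rangle\langle\psi|)=|\phi\rangle\langle\phi|$ and $|\psi\rangle,|\phi\rangle$ have the same Schmidt rank $r$, then they have identical Schmidt coefficients. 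The stated claim then follows by contraposition: if the coefficients differ, neither state can be obtained from the other by deterministic local operations, so they are incomparable for $\prec_1$.

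First I would exploit the purity of both the input and the output. Writing $\{A_q\}$ and $\{B_p\}$ for Kraus operators of $\mathcal{E}_1$ and $\mathcal{E}_2$, the identity $(\mathcal{E}_1\otimes\mathcal{E}_2)(|\psi\rangle\langle\psi|)=|\phi\rangle\langle\phi|$ expresses a sum of nonnegative rank-one terms as a rank-one operator, which forces every vector $(A_q\otimes B_p)|\psi\rangle$ to be proportional to $|\phi\rangle$, say $(A_q\otimes B_p)|\psi\rangle=\mu_{qp}|\phi\rangle$. In the Schmidt bases of $\psi$ and $\phi$, with $D=\operatorname{diag}(\sqrt{\lambda_i})$ the coefficient matrix of $|\psi\rangle$ and $M$ that of $|\phi\rangle$, this becomes $A_q D B_p^T=\mu_{qp}M$ for all $q,p$, where $D$ and $M$ both have rank $r$.

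The core of the argument is the bilinear relation just obtained. Restricting to the $r$-dimensional supports of the reduced states, where $D$ and $M$ are invertible, any active pair ($\mu_{qp}\neq 0$) forces $A_q$ and $B_p$ to be invertible there, and left multiplication by $M^{-1}$ turns the relation into $G_q B_p^T=\mu_{qp}I$ with $G_q=M^{-1}A_q D$. Fixing one active pair $(q_0,p_0)$ makes $G_{q_0}$ and $B_{p_0}$ invertible; the relations $G_{q_0}B_p^T=\mu_{q_0 p}I$ and $G_q B_{p_0}^T=\mu_{q p_0}I$ then show that every $B_p$ is a scalar multiple of $B_{p_0}$ and every $A_q$ a scalar multiple of $A_{q_0}$. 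Trace preservation, $\sum_q A_q^\dag A_q=I$ and $\sum_p B_p^\dag B_p=I$, now forces $A_{q_0}$ and $B_{p_0}$ to be proportional to isometries on the supports, so that $|\phi\rangle=(U_1\otimes U_2)|\psi\rangle$ up to a phase for some local unitaries $U_1,U_2$. Since local unitaries leave the Schmidt coefficients unchanged, this gives $\kappa_i=\lambda_i$.

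I expect the main obstacle to be the bookkeeping with the supports and output spaces that makes the passage from $(A_q\otimes B_p)|\psi\rangle=\mu_{qp}|\phi\rangle$ to the invertible $r\times r$ matrix relation rigorous: one must verify that the equal-rank hypothesis indeed confines each active $A_q$ and $B_p$ to bijections between the relevant $r$-dimensional supports, since otherwise the inverses $G_{q_0}^{-1}$ and $B_{p_0}^{-1}$ invoked above need not exist. Once this point is settled, the bilinear identity does all the work and the conclusion is immediate.
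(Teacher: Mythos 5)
Your proof is correct, and it shares the paper's skeleton: use Proposition \ref{prop2} to convert $\prec_1$ into the existence of deterministic local operations, then prove that pure-to-pure local transformations between states of equal Schmidt rank are rigid. The rigidity step, however, is executed differently. You merge the two local channels into one product channel and analyze the bilinear relation $A_q D B_p^T=\mu_{qp}M$, showing that all active Kraus operators are proportional to a single product isometry; the paper proceeds sequentially, writing $\rho=\Lambda_1\circ\Lambda_2(\rho')$, using purity to force $\Lambda_1$ to map every eigenstate of the possibly mixed intermediate state $\Lambda_2(\rho')$ to $\rho$, concluding that the Kraus operators of $\Lambda_1$ satisfy $K_q=\sqrt{p_q}K$ with $K$ unitary, and then repeating the argument for $\Lambda_2$. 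Your version is more symmetric and never has to reason about the intermediate state; the paper's version sidesteps your flagged obstacle by a preprocessing step you did not use: via Proposition \ref{prop1} it transfers both states to $\mathcal{H}_0\otimes\mathcal{H}_0$ with $\dim\mathcal{H}_0$ equal to the Schmidt rank, so every Kraus operator is a square map on a space where the states have full rank and the needed inverses exist automatically. The support bookkeeping you identify is a genuine but fillable gap: rank counting in $A_q D B_p^T=\mu_{qp}M$ (with $D$ and $M$ of rank $r$) shows that each active $A_q$ and $B_p$, restricted to the $r$-dimensional supports, is a bijection onto the column, respectively row, space of $M$, while indices with $\mu_{qp}=0$ for all partners give operators vanishing on the supports, so the trace-preservation argument still yields isometries; alternatively, adopting the Proposition \ref{prop1} reduction at the outset removes the issue entirely.
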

\begin{proof}
Consider two pure bipartite states $|\psi \rangle$ and 
$|\psi' \rangle$ with the same Schmidt rank $d$ such that 
$|\psi \rangle \langle \psi | \prec_1 |\psi' \rangle \langle \psi' |$. 
Denote $\lambda_i$ the Schmidt coefficients of $|\psi \rangle$. 
Define $|\varphi \rangle=\sum_{i=1}^d \sqrt{\lambda_i}
|i\rangle \otimes |i\rangle$ where $\{|i\rangle\}_i$ is 
an orthonormal basis of an Hilbert space ${\cal H}_0$ of 
dimension $d$, $|\varphi' \rangle \in {\cal H}_0\otimes{\cal H}_0$ 
whose Schmidt coefficients are those of $|\psi' \rangle$, 
$\rho=|\varphi \rangle \langle \varphi |$, and 
$\rho'=|\varphi' \rangle \langle \varphi' |$. 
From Proposition \ref{prop1}, it follows that $\rho \prec_1 \rho'$. 
Thus, due to Proposition \ref{prop2}, there are two deterministic 
local operation $\Lambda_n$, each corresponding to a system 
$n$, such that $\rho=\Lambda_1 \circ \Lambda_2 (\rho')$. 
We denote the eigenstates of $\Lambda_2(\rho')$ by 
$| \tilde p \rangle$. As $\Lambda_1$ is linear and $\rho$ is pure, 
the above equality gives 
$\rho=\Lambda_1 (|\tilde p \rangle \langle \tilde p |)$ for any $p$. 
Kraus operators of $\Lambda_1$ are of the form $K_q\otimes I_0$ 
where $I_0$ is the identity operator on ${\cal H}_0$ and the linear 
maps $K_q:{\cal H}_0\rightarrow {\cal H}_0$ are such that 
$\sum_q K^\dag_q K_q=I_0$. The state $|\tilde 1 \rangle$ can be 
written as $|\tilde 1 \rangle=\sum_{i=1}^d \sqrt{\mu_i}
|i'\rangle \otimes |i''\rangle$ where $\{ |i'\rangle \}_i$ and 
$\{ |i''\rangle \}_i$ are orthonormal bases of ${\cal H}_0$ and 
some coefficients $\mu_i$ may vanish. Since $\Lambda_1$ changes 
$|\tilde 1 \rangle \langle \tilde 1 |$ into the pure state $\rho$, 
one has, for any $q$ and $i$, 
$$\sqrt{\mu_i} K_q |i'\rangle=\sqrt{p_q} \sum_{j=1}^d 
\sqrt{\lambda_j} \langle i'' | j \rangle |j \rangle , $$ where 
$p_q=\langle \tilde 1 | K^\dag_qK_q \otimes I_0 |\tilde 1 \rangle$. 
Since $\lambda_j \ne 0$ for any $j$ and $\{ |j\rangle \}_j$ is 
a basis of ${\cal H}_0$, the above sum is nonzero for any $i$. 
Moreover, at least one $p_q$ is nonvanishing. So, $\mu_i \ne 0$ 
for any $i$, and hence $K_q=\sqrt{p_q} K$ where 
$K:{\cal H}_0\rightarrow {\cal H}_0$ is independent of $q$. 
It is necessarily a unitary operator, and so $\Lambda_1$ is 
a unitary operation, and $\Lambda_2(\rho')$ a pure state with 
Schmidt coefficients $\lambda_i$. By following the same steps as 
above, with $\Lambda_2(\rho')$, $\Lambda_2$, and 
$|\varphi'\rangle$ in lieu of, respectively, $\rho$, $\Lambda_1$, 
and $|\tilde 1 \rangle$, one can shown that $\Lambda_2$ is also 
a unitary operation and the Schmidt coefficients of $|\psi' \rangle$ 
are $\lambda_i$.
\end{proof}
As an example, consider the pure states $|\psi_\epsilon \rangle$ 
with Schmidt rank $2$ and coefficients $\epsilon$ and $1-\epsilon$. 
For any $\eta>0$, which can be chosen as small as wished, there is 
a correlation monotone $C$ such that 
$C(|\psi_\eta \rangle \langle \psi_\eta |)
>C(|\psi_{\epsilon} \rangle \langle \psi_{\epsilon} |)$ 
for any $\epsilon \ne \eta$. 

\section{Correlation orderings with maximally correlated states}
\label{Comcs}

We now examine specific classes of correlation monotones $C$ for 
which $C(|\psi_\epsilon \rangle \langle \psi_\epsilon |)$ is 
a nondecreasing function of $\epsilon$ on $[0,1/2]$. They are 
defined by the Conditions below. A local measurement is 
characterized by a complete set of Kraus operators $\tilde K_{q,s}$ 
of the form of eq.\eqref{Kq}. When the measurement is performed 
on systems in the state $\rho$, the probability $p_q$ of outcome 
$q$ and the corresponding postmeasurement state $\rho_q$ are 
given by
\begin{equation}
p_q=\sum_{s=1}^{t_q}
\operatorname{tr}(\rho \tilde K^\dag_{q,s}\tilde K_{q,s}) , \;\;\;
\rho_q=\sum_{s=1}^{t_q} \tilde K_{q,s} \rho \tilde K^\dag_{q,s} 
/ p_q . \label{prho}
\end{equation} 
The number $t_q$ of terms in the above sums can depend on $q$. 
If $t_q=1$, the measurement is said to be efficient \cite{NC}. 
Note that here the codomains of the local operators $K_{q,s}$ 
and $K_{q',s'}$ can be different from one another provided 
$q' \ne q$. A very commun requirement for measures of 
entanglement or of any quantum resource 
\cite{HHHH,V,NBCPJA,PCBNJ,BCP,LZYDL,BSFPW,KPTAJ,K} 
reads here as follows. 
\begin{cond}\label{cond2}
The measure $C$ is nonincreasing on average under local 
measurements, i.e., $\sum_q p_q C(\rho_q) \le C(\rho)$ for 
any state $\rho$ and complete set of Kraus operators \eqref{Kq}, 
where $p_q$ and $\rho_q$ are given by eq.\eqref{prho}.
\end{cond}
We also consider the following weaker requirement.
\begin{cond}\label{cond3}
The measure $C$ does not increase with probability unity 
under local measurements, i.e., $\min_q C(\rho_q) \le C(\rho)$ for 
any state $\rho$ and complete set of Kraus operators \eqref{Kq}, 
where $\rho_q$ is given by eq.\eqref{prho}.
\end{cond}
Clearly, if $C$ satisfies Condition \ref{cond2} then it also satisfies 
Condition \ref{cond3}. For correlation monotones, it 
is enough to impose Condition \ref{cond2} or Condition 
\ref{cond3} for efficient projective measurements, as shown by 
the Proposition below.
\begin{prop}\label{prop5}
A correlation monotone $C$ fulfills Condition \ref{cond2} 
(Condition \ref{cond3}) if and only if $C$ fulfills it for efficient local 
measurements whose Kraus operators \eqref{Kq} are projectors. 
\end{prop}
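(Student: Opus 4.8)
The plan is to prove the nontrivial implication, since the converse is immediate: if $C$ obeys Condition \ref{cond2} (or Condition \ref{cond3}) for arbitrary local measurements, it does so in particular for efficient ones whose Kraus operators are projectors. So I assume $C$ is a correlation monotone that fulfills the condition for efficient projective local measurements, and I take an arbitrary local measurement with Kraus operators $\tilde K_{q,s}$ of the form \eqref{Kq} acting on a single system $n$. Writing the local operators as $K_{q,s}$ on ${\cal H}_n$ with $\sum_{q,s} K_{q,s}^\dag K_{q,s}=I_n$, and padding with zero operators so that $s$ runs over a common range, the idea is to realize this measurement by a deterministic local dilation followed by an efficient projective measurement on an appended register, so that the assumed condition can be invoked.

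First I would enlarge system $n$ with an ancilla ${\cal H}_Q\otimes{\cal H}_S$ carrying orthonormal bases $\{|q\rangle_Q\}$ and $\{|s\rangle_S\}$, through the isometry $V|\psi\rangle=\sum_{q,s} K_{q,s}|\psi\rangle \otimes |q\rangle_Q|s\rangle_S$, which satisfies $V^\dag V=I_n$ by completeness. Since $V$ is a single Kraus operator defining a deterministic local operation on system $n$, Condition \ref{cond1} gives $C(\sigma)\le C(\rho)$ for $\sigma=(V\otimes I)\rho(V\otimes I)^\dag$. Next I would measure only the register $Q$, with the projectors $P_q=I\otimes|q\rangle_Q\langle q|\otimes I$, which form a complete set and define an efficient projective local measurement on the enlarged system $n$. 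Its outcome probabilities are exactly the $p_q$ of eq.\eqref{prho}, and the postmeasurement states $\sigma_q=P_q\sigma P_q/p_q$ still carry the register $S$.

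The key step is that tracing out the whole ancilla is itself a deterministic local operation on system $n$, and a direct computation shows it maps $\sigma_q$ precisely to the original postmeasurement state $\rho_q$ of eq.\eqref{prho}: the off-diagonal terms in $s$ are annihilated by the partial trace over ${\cal H}_S$, leaving $\operatorname{tr}_{{\cal H}_Q\otimes{\cal H}_S}\sigma_q=\rho_q$. Hence Condition \ref{cond1} yields $C(\rho_q)\le C(\sigma_q)$ for every $q$. Applying the assumed efficient projective condition to the $Q$-measurement of $\sigma$ and chaining with $C(\sigma)\le C(\rho)$ then gives $\sum_q p_q C(\rho_q)\le \sum_q p_q C(\sigma_q)\le C(\sigma)\le C(\rho)$ for Condition \ref{cond2}, and likewise $\min_q C(\rho_q)\le \min_q C(\sigma_q)\le C(\sigma)\le C(\rho)$ for Condition \ref{cond3}.

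The step I expect to be the main obstacle is the treatment of inefficiency, i.e.\ of the degeneracy index $s$. The naive route, refining to the efficient measurement with outcomes $(q,s)$ and then coarse-graining over $s$, would require $C$ to be convex in order to pass from the states $\rho_{q,s}$ to their mixture $\rho_q$, which is not assumed. The device of measuring only the register $Q$ while keeping $S$, and discarding the whole ancilla afterwards, sidesteps this: the inequality $C(\rho_q)\le C(\sigma_q)$ rests solely on monotonicity under the deterministic local partial trace, so no convexity enters. The remaining points are routine to check, namely that $V$, the appended register, the projective $Q$-measurement, and the partial trace are all of the local form \eqref{Kq}, and that the construction carries over \emph{mutatis mutandis} to the other two types of local Kraus operators in \eqref{Kq}.
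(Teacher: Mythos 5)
Your proposal is correct and is essentially the paper's own argument: both realize the general (possibly inefficient) local measurement as a deterministic local operation that writes the outcome $q$ into an ancillary register, followed by an efficient projective measurement of that register, with the degeneracy index $s$ disposed of by monotonicity under deterministic local operations rather than by convexity (the pitfall you correctly flag). The differences are cosmetic: the paper dilates with the channel whose Kraus operators $| q \rangle \otimes \tilde K'_{q,s}$ record only $q$, so the $s$-coherences are killed inside the deterministic operation and Proposition \ref{prop1} identifies the post-measurement states with $\rho_q$, whereas you dilate with a Stinespring isometry carrying an extra $S$ register and kill the $s$-coherences at the final partial trace, using only Condition \ref{cond1}; note merely that, exactly as in the paper's construction of the operators $K'_{q,s}$, your isometry $V$ presupposes that the codomains ${\cal H}_n^{(q)}$, which may differ with $q$, have first been embedded into a common Hilbert space.
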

\begin{proof}
Let $\rho$ be any state on any multipartite Hilbert space 
$\otimes_n {\cal H}_n$. It is enough to consider a measurement 
characterized by a complete set of Kraus operators of the form 
$\tilde K_{q,s}=K_{q,s} \otimes I_1^>$ 
where $K_{q,s} : {\cal H}_1 \rightarrow {\cal H}_1^{(q)}$, 
$q=1, \ldots, d$, and $s=1, \ldots, t_q$. Let ${\cal H}'_1$ be 
an Hilbert space of dimension larger than those of the spaces 
${\cal H}_1^{(q)}$, and define 
$K'_{q,s} : {\cal H}_1 \rightarrow {\cal H}'_1$ by 
$\langle j' | K'_{q,s} | i \rangle
={_q\langle} j | K_{q,s} | i \rangle$ for $j$ not larger than 
the dimension of ${\cal H}_1^{(q)}$ and 
$\langle j' | K'_{q,s} | i \rangle=0$ otherwise, where 
$\{ | i \rangle \}_i$, $\{ | i' \rangle \}_i$, and 
$\{ | i \rangle_q\}_i$ are orthonormal bases of ${\cal H}_1$, 
${\cal H}'_1$, and ${\cal H}^{(q)}_1$, respectively, 
and let $\tilde K'_{q,s}=K'_{q,s} \otimes I_1^>$. 
The operators $K'_{q,s}$ satisfy 
$\sum_{q,s}(K'_{q,s})^\dag K'_{q,s}=I_1$. Let ${\cal H}_1''$ 
be an Hilbert space of dimension $d$ and $\{ | q \rangle \}_q$ 
be an orthonormal basis of this space. The deterministic 
local operation $\Lambda$ with Kraus operators 
$| q \rangle \otimes \tilde K'_{q,s}$ changes $\rho$ into 
$\Lambda(\rho)=\sum_q A_q$ where $A_q=| q \rangle \langle q | 
\otimes \sum_s \tilde K'_{q,s} \rho (\tilde K'_{q,s})^\dag$. 
Performing the efficient projective local measurement with Kraus 
operators $(| q \rangle \langle q | \otimes I'_1) \otimes I_1^>$ 
where $I'_1$ is the identity operator on ${\cal H}'_1$ on 
$\Lambda(\rho)$ gives $A_q/p_q$ with probability 
$p_q=\operatorname{tr}(A_q)=\sum_s 
\operatorname{tr}(\tilde K_{q,s}^\dag\tilde K_{q,s} \rho)$. 
From Proposition \ref{prop1}, it follows that $C(A_q/p_q)=C(\rho_q)$ 
where $\rho_q=\sum_s \tilde K_{q,s} \rho \tilde K^\dag_{q,s} 
/ p_q$. Finally, $C[\Lambda(\rho)] \le C(\rho)$ leads to the results. 
The converses are trivial.
\end{proof}
Using this Proposition, it can be shown that the total mutual 
information \eqref{I} satisfies Condition \ref{cond2}.
\begin{proof}
The total mutual information can be written as $I(\rho)
=\min_\delta S(\rho|| \delta)$ where $S(\rho||\delta)$ is 
the quantum relative entropy and the mininum is taken over 
all the product states on the same Hilbert space than $\rho$ 
\cite{MPSVW}. For any states $\rho$ and $\delta$ on 
any Hilbert space ${\cal H}$ and Kraus operators $\tilde K_{q,1}$ 
such that $\sum_q \tilde K_{q,1}^\dag \tilde K_{q,1}$ is the identity 
operator on ${\cal H}$, it can be shown that 
$S(\rho|| \delta) \ge \sum_q p_q S(\rho_q|| \delta_q)$ where 
$p_q$ and $\rho_q$ are given by eq.\eqref{prho} with $t_q=1$, 
and $\delta_q$ is given by a similar expression with $\delta$ 
in place of $\rho$ \cite{BCP}. 
If ${\cal H}=\otimes_{n=1}^N {\cal H}_n$, 
$\delta$ is a product state on ${\cal H}$, and the operators 
$\tilde K_{q,1}$ are of the form of eq.\eqref{Kq}, 
then the density operators $\delta_q$ are also product states. 
Thus, the above expression for the total mutual information 
gives $S(\rho|| \delta) \ge \sum_q p_q I(\rho_q)$, which 
leads to the result.
\end{proof}
It is usual to derive the analogue 
of Condition \ref{cond1} for measures of entanglement or of 
other quantum resources assuming they are convex and obey 
the analogue of Condition \ref{cond2} \cite{V,BCP}. 
These assumptions lead here to very particular correlation 
monotones, as shown by the Proposition below. 
\begin{prop}\label{prop6}
For any $N$, if a function of the $N$-partite states is convex and 
nonincreasing on average under local measurements, then it is 
an entanglement monotone.
\end{prop}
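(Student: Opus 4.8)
The plan is to show that a convex function $C$ satisfying Condition \ref{cond2} cannot increase under deterministic local operations and classical communication (LOCC), which is the defining property of an entanglement monotone.

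First I would observe that $C$ already obeys Condition \ref{cond1}. A deterministic local operation $\Lambda$ with a single set of Kraus operators of the form \eqref{Kq} coincides with the efficient local measurement with the same Kraus operators followed by discarding the outcome: with $p_q$ and $\rho_q$ as in eq.\eqref{prho} and $t_q=1$, one has $\Lambda(\rho)=\sum_q p_q\rho_q$. Convexity and Condition \ref{cond2} then give
\begin{equation}
C(\Lambda(\rho))=C\Big(\sum_q p_q\rho_q\Big)\le\sum_q p_q C(\rho_q)\le C(\rho).
\end{equation}
Applied to a local unitary and its inverse, this incidentally shows that $C$ is invariant under local unitaries.

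Next I would exploit the round structure of a finite deterministic LOCC protocol: its first step is a measurement by a single party, with Kraus operators of the form \eqref{Kq} and broadcast outcome $q$, after which the parties run a deterministic LOCC protocol $\Lambda_q$ conditioned on $q$ and using strictly fewer rounds. Proceeding by induction on the number of rounds, the outcome-forgetting output of the whole protocol is $\Lambda(\rho)=\sum_q p_q\Lambda_q(\rho_q)$, and
\begin{equation}
C(\Lambda(\rho))\le\sum_q p_q C(\Lambda_q(\rho_q))\le\sum_q p_q C(\rho_q)\le C(\rho),
\end{equation}
where the first inequality is convexity, the second is the induction hypothesis applied to each branch $\Lambda_q$, and the third is Condition \ref{cond2}. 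The base case, a protocol without classical communication, is a composition of single-party deterministic local operations and is handled by the previous step.

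I expect the main obstacle to be not the inequality chains, which are routine, but the clean reduction of an arbitrary deterministic LOCC map to the form ``one single-party measurement, then a conditional shorter protocol'', together with the check that each conditional branch $\Lambda_q$ is itself trace preserving so that the induction hypothesis applies to it. The conceptual crux is that convexity is precisely the ingredient that lets the deterministic, outcome-forgetting protocol inherit the on-average bound supplied by Condition \ref{cond2}: it converts the average nonincrease at each measurement round into a genuine nonincrease once the classical record is discarded.
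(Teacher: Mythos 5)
Your proof is correct and takes essentially the same route as the paper: both first derive Condition 1 from convexity plus Condition 2, and then handle classical communication with the identical inequality chain (convexity, then monotonicity of the conditional branch, then Condition 2 applied to the first measurement). The only difference is bookkeeping --- the paper reduces the protocol to a sequence of elementary one-way maps (one local measurement followed by conditional deterministic local operations, bounded via Condition 1), whereas you induct on the number of communication rounds and bound each conditional branch $\Lambda_q$ by the induction hypothesis, which if anything makes the treatment of adaptive, memory-dependent protocols slightly more explicit.
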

\begin{proof}
Consider a function $C$ of the $N$-partite states convex and 
nonincreasing on average under local measurements. Let us 
first show that $C$ is a correlation monotone. Let $\rho$ be 
any $N$-partite state and $\Lambda$ any deterministic local 
operation with Kraus operators $\tilde K_{q,1}$. The convexity 
of $C$ yields $C[\Lambda(\rho)] \le \sum_q p_q C(\rho_q)$ 
where $p_q$ and $\rho_q$ are given by eq.\eqref{prho} 
with $t_q=1$. So, since $C$ satisfies Condition \ref{cond2}, 
it also satisfies Condition \ref{cond1}. 
An entanglement monotone is nonincreasing under state 
transformations involving only local operations and classical 
communication. Such a transformation is a sequence of maps 
involving only local operations and one-way classical 
communication. Thus, it is enough to show that 
$C[\Lambda'(\rho)] \le C(\rho)$ for any map $\Lambda'$ 
given by $\Lambda'(\rho)=\sum_q \Lambda^{(q)}
(\tilde K_{q,1} \rho \tilde K_{q,1}^\dag)$ 
with a complete set of Kraus operators 
$\tilde K_{q,1}=K_q \otimes I_1^>$ and deterministic local 
operations $\Lambda^{(q)}$ whose Kraus operators are of 
the form $I_1 \otimes K_{q'}^{(q)} \otimes I_2^>$. 
Since $C$ is convex and obeys Condition 1, and $\Lambda^{(q)}$ 
is linear, $C[\Lambda'(\rho)] \le \sum_q p_q C(\rho_q)$ where 
$p_q$ and $\rho_q$ are given by eq.\eqref{prho} with $t_q=1$. 
The wished inequality follows immediately from Condition \ref{cond2}. 
\end{proof}
We remark that, as a consequence of this Proposition, for 
any measure $C$ fulfilling Condition \ref{cond2}, there is 
an entanglement monotone that does not exceed $C$, which is 
the convex hull of $C$ \cite{PRL}. Condition \ref{cond3}, together 
with Condition \ref{cond1}, leads to the following result.
\begin{prop}\label{prop7}
If $C$ is a correlation monotone obeying Condition \ref{cond3}, 
then, for any pure bipartite state $|\psi \rangle$ with Schmidt rank 
$r$ and coefficients $\lambda_i$, 
$C(|\psi \rangle \langle \psi |)=s[(\lambda_i)_{i=1}^r]$ where 
$s$ is an entropy, i.e., 
$s({\boldsymbol \lambda}) \le s({\boldsymbol \mu})$ when 
the vector ${\boldsymbol \lambda}$ majorizes the vector 
${\boldsymbol \mu}$ \cite{MOA}.
\end{prop}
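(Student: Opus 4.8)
The plan is to separate the claim into two parts: first, that $C(|\psi\rangle\langle\psi|)$ is a well-defined function $s$ of the Schmidt-coefficient vector alone, and second, that this function reverses majorization. The first part is immediate from Proposition~\ref{prop1}: as noted just after its proof, the value of any correlation monotone on a pure bipartite state depends only on its Schmidt rank and coefficients, so I may set $s[(\lambda_i)]=C(|\psi\rangle\langle\psi|)$. A shorter coefficient vector is padded with zeros without ambiguity, since formally adjoining a vanishing Schmidt coefficient leaves the state, and hence $C$, unchanged; thus $s$ is a single function on probability vectors, and majorization between vectors of different lengths is read with this convention. It then remains to prove $s(\boldsymbol\lambda)\le s(\boldsymbol\mu)$ whenever $\boldsymbol\lambda$ majorizes $\boldsymbol\mu$.

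For the second part I would reduce majorization to its elementary steps. By the Hardy--Littlewood--P\'olya characterization \cite{MOA}, if $\boldsymbol\lambda$ majorizes $\boldsymbol\mu$ then $\boldsymbol\mu$ is reached from $\boldsymbol\lambda$ along a finite chain $\boldsymbol\lambda=\boldsymbol\nu_0,\boldsymbol\nu_1,\dots,\boldsymbol\nu_m=\boldsymbol\mu$ in which each $\boldsymbol\nu_{j}$ follows from $\boldsymbol\nu_{j-1}$ by a single T-transform, i.e.\ a partial averaging of two coordinates that leaves the others fixed, so that $\boldsymbol\nu_{j-1}$ majorizes $\boldsymbol\nu_{j}$. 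It therefore suffices to establish $s(\boldsymbol\nu_{j-1})\le s(\boldsymbol\nu_{j})$ for one such step and to chain the inequalities.

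The crux is to realize a single step by a local measurement to which Condition~\ref{cond3} applies. Working on the less peaked state $|\psi_{\nu_j}\rangle$, I would build a two-outcome measurement on system~$1$ that acts as the identity outside the two-dimensional subspace of the affected coordinates, its block Kraus operators being a diagonal operator $A_1$ and an antidiagonal (coordinate-swapping) operator $A_2$. Their entries are fixed by the completeness relation $A_1^\dagger A_1+A_2^\dagger A_2=I$ together with the requirement that each normalized outcome have reduced eigenvalues equal to the more peaked pair from $\boldsymbol\nu_{j-1}$; solving this linear system yields both outcome probabilities in $[0,1]$ precisely because the more peaked pair majorizes the less peaked one, which holds by construction of the T-transform. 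The decisive feature is that \emph{both} outcomes carry the same Schmidt coefficients $\boldsymbol\nu_{j-1}$, so by Proposition~\ref{prop1} every $C(\rho_q)$ equals $s(\boldsymbol\nu_{j-1})$ and hence $\min_q C(\rho_q)=s(\boldsymbol\nu_{j-1})$. Condition~\ref{cond3} then gives $s(\boldsymbol\nu_{j-1})\le C(|\psi_{\nu_j}\rangle)=s(\boldsymbol\nu_{j})$, and chaining over $j$ delivers $s(\boldsymbol\lambda)\le s(\boldsymbol\mu)$.

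I expect this equal-branch construction to be the main obstacle. Condition~\ref{cond3} controls only the minimum of $C$ over outcomes, so the whole argument collapses unless the measurement can be arranged to send \emph{every} branch to the identical, more peaked, target spectrum; a measurement whose branches had differing Schmidt coefficients would leave $s(\boldsymbol\nu_{j-1})$ unbounded. Verifying that such a two-outcome measurement exists for any single T-transform, and that its Kraus operators have the required local form of eq.~\eqref{Kq} and are complete, is the technical heart of the proof. Once it is in place, the well-definedness furnished by Proposition~\ref{prop1} and the T-transform decomposition of majorization make the remainder routine.
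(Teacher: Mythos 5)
Your overall strategy works, and it is genuinely different from the paper's. The paper never touches T-transforms: it invokes Nielsen's theorem \cite{N} as a black box to convert the majorized state $\rho'$ into $\rho$ by a one-way LOCC protocol $\rho=\sum_q \Lambda^{(q)}(\tilde K_{q,1}\rho'\tilde K_{q,1}^\dag)$, and then exploits purity of the target: since $\rho$ is pure and each $\Lambda^{(q)}$ is linear, every branch must satisfy $\rho=\Lambda^{(q)}(\rho'_q)$ exactly, so Condition \ref{cond1} gives $C(\rho)\le C(\rho'_q)$ for every outcome $q$, and Condition \ref{cond3} then yields $C(\rho)\le\min_q C(\rho'_q)\le C(\rho')$. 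Note that this purity argument sidesteps the feature you call decisive: the paper does not need all branches to carry an identical spectrum, only that each branch be mappable to the pure target by a deterministic local operation, Condition \ref{cond1} doing the rest. Your route instead re-derives, at the level of a single T-transform, the elementary step underlying Nielsen's protocol; combined with the Hardy--Littlewood--P\'olya decomposition it is more self-contained (classical majorization theory plus an explicit two-outcome measurement, with no appeal to \cite{N}), at the price of doing by hand the work the paper delegates to that reference.

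There is one slip in the construction you flag as the technical heart: the Kraus operators cannot act as the identity outside the affected two-dimensional subspace, since completeness would then require $2I=I$ on that complement. They must act there as $\sqrt{p_q}$ times the identity, with $p_1+p_2=1$. This same scaling is what keeps the untouched Schmidt coefficients unchanged after normalization, because a T-transform preserves the sum of the two affected coordinates. Concretely, writing $(a,b)$ for the more peaked pair and $(c,d)$ for the less peaked one, so that $a+b=c+d$ and $b\le d\le c\le a$, take block operators $A_1=\operatorname{diag}\bigl(\sqrt{p_1a/c},\sqrt{p_1b/d}\bigr)$ and antidiagonal $A_2$ with entries $\sqrt{p_2a/d}$ and $\sqrt{p_2b/c}$: completeness on the block holds iff $p_1a+p_2b=c$ and $p_1b+p_2a=d$, which forces $p_1+p_2=1$ and $p_1=(c-b)/(a-b)\in[0,1]$ (the case $a=b$ being trivial). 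Each unnormalized branch then has squared Schmidt coefficients $p_q a$, $p_q b$ in the block and $p_q\nu_i$ outside it, so after normalization both outcomes carry exactly the spectrum ${\boldsymbol \nu}_{j-1}$, as your argument requires; Proposition \ref{prop1} handles the basis swap in the second branch. With this correction your proof goes through.
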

\begin{proof}
Due to Proposition \ref{prop1}, for any pure bipartite state 
$\rho=|\psi\rangle\langle \psi|$, 
$C(\rho)=s({\boldsymbol \lambda})$ where 
${\boldsymbol \lambda}=(\lambda_i)_{i=1}^{r}$ with 
the Schmidt rank $r$ and coefficients $\lambda_i$ of 
$|\psi \rangle$. Consider any vector ${\boldsymbol \lambda}'$ 
majorized by ${\boldsymbol \lambda}$ \cite{MOA}. 
Any pure state $\rho'$ with Schmidt coefficients $\lambda'_i$ 
can be changed into $\rho$ by local operations and one-way 
classical communication, i.e.,  $\rho
=\sum_q \Lambda^{(q)}(\tilde K_{q,1} \rho' \tilde K_{q,1}^\dag)$ 
with a complete set of Kraus operators 
$\tilde K_{q,1}=K_q \otimes I_2$ and deterministic local operations 
$\Lambda^{(q)}$ whose Kraus operators are of the form 
$I_1 \otimes K_{q'}^{(q)}$ \cite{N}. Since $\rho$ is pure 
and $\Lambda^{(q)}$ linear, $\rho=\Lambda^{(q)}(\rho'_q)$ 
for any $q$, where $\rho'_q$ is given by eq.\eqref{prho} 
with $t_q=1$ and $\rho'$ in place of $\rho$. Thus, 
$C(\rho) \le C(\rho'_q)$ for any $q$, and so $C(\rho) \le C(\rho')$, 
which can be rewritten 
as $s({\boldsymbol \lambda}) \le s({\boldsymbol \lambda}')$.
\end{proof}
As is well known, a pure bipartite state $|\psi \rangle$ is not more 
entangled than another one $|\psi' \rangle$ if and only if the vector 
made up of the Schmidt coefficients of $|\psi \rangle$ majorizes 
that corresponding to $|\psi' \rangle$ \cite{N}. Consequently, due 
to Proposition \ref{prop7}, a correlation monotone fulfilling 
Condition \ref{cond3} necessarily coincides with an entanglement 
monotone for pure bipartite states. 

Similarly to the ordering $\prec_1$, we define the preorders 
$\prec_2$ with correlation monotones fulfilling Condition 
\ref{cond2} and $\prec_3$ with correlation monotones fulfilling 
Condition \ref{cond3}. These three orderings are related by 
$\rho \prec_1 \rho' \Rightarrow \rho \prec_3 \rho' 
\Rightarrow \rho \prec_2 \rho'$. Contrary to $\prec_1$, for 
$\prec_2$ and $\prec_3$, there are maximally correlated states on 
bipartite Hilbert spaces. This results from the following Proposition.
\begin{prop}\label{prop8}
Let $\rho$ be a state on any bipartite Hilbert space ${\cal H}$. 
The three following statements are equivalent.

\noindent i) $\rho$ is maximally entangled on ${\cal H}$ ,

\noindent ii) $\rho$ is maximally correlated on ${\cal H}$ 
for $\prec_3$ ,

\noindent iii) $\rho$ is maximally correlated on ${\cal H}$ 
for $\prec_2$. 
\end{prop}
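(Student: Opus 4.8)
The plan is to prove the cycle (i)$\Rightarrow$(ii)$\Rightarrow$(iii)$\Rightarrow$(i). The step (ii)$\Rightarrow$(iii) is immediate from the stated relation $\rho\prec_3\rho'\Rightarrow\rho\prec_2\rho'$: if every $\sigma$ satisfies $\sigma\prec_3\rho$, the same $\sigma$ satisfy $\sigma\prec_2\rho$. For (iii)$\Rightarrow$(i) I would use the entanglement of formation $E_f$, which is a correlation monotone obeying Condition \ref{cond2} and hence a legitimate $\prec_2$-measure. If $\rho$ is maximally correlated for $\prec_2$, then $E_f(\sigma)\le E_f(\rho)$ for all $\sigma$ on $\mathcal H$, so $\rho$ maximizes $E_f$; by the characterization of the $E_f$-maximizers \cite{LZFFL}, $\rho$ is a maximally entangled state, i.e.\ a pure state of Schmidt rank $d=\min(d_1,d_2)$ with uniform Schmidt coefficients.

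The substantial implication is (i)$\Rightarrow$(ii): assuming $\rho$ maximally entangled, I must show $C(\sigma)\le C(\rho)$ for every state $\sigma$ on $\mathcal H$ and every correlation monotone $C$ obeying Condition \ref{cond3}. Take $d_1\le d_2$. The idea is to factor this comparison through a pure state on the \emph{same} Hilbert space. Let $\sigma^{(1)}=\operatorname{tr}_{\mathcal H_2}\sigma$ have spectrum $\boldsymbol\mu=(\mu_i)$, and set $|\Psi\rangle=\sum_i\sqrt{\mu_i}\,|a_i\rangle_1\otimes|b_i\rangle_2$, the purification of $\sigma^{(1)}$ that uses $\mathcal H_2$ as purifying space, with $\{|a_i\rangle\}$ diagonalizing $\sigma^{(1)}$ and $\{|b_i\rangle\}$ orthonormal in $\mathcal H_2$. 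This fits because its Schmidt rank is $\operatorname{rank}\sigma^{(1)}\le d_1\le d_2$; it is essential to purify the marginal of party $1$ into the \emph{larger} factor $\mathcal H_2$, since $\operatorname{rank}\sigma^{(2)}$ may exceed $d_1$ and the reverse purification need not exist.

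Next I would use the fact that two states with the same reduced state on party $1$ are related by a channel on the complementary party: since $\operatorname{tr}_{\mathcal H_2}|\Psi\rangle\langle\Psi|=\sigma^{(1)}=\operatorname{tr}_{\mathcal H_2}\sigma$, there is a deterministic local operation $\mathrm{id}_1\otimes\mathcal F$ with $(\mathrm{id}_1\otimes\mathcal F)(|\Psi\rangle\langle\Psi|)=\sigma$. This follows from the isometric equivalence of purifications: purify $\sigma$ to $|\Phi\rangle$ on $\mathcal H_1\otimes\mathcal H_2\otimes\mathcal H_C$, observe that $|\Psi\rangle$ and $|\Phi\rangle$ both purify $\sigma^{(1)}$, extract an isometry $V:\mathcal H_2\to\mathcal H_2\otimes\mathcal H_C$ with $(\mathrm{id}_1\otimes V)|\Psi\rangle=|\Phi\rangle$, and set $\mathcal F(\cdot)=\operatorname{tr}_{\mathcal H_C}(V\cdot V^\dagger)$. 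By Condition \ref{cond1}, $C(\sigma)\le C(|\Psi\rangle\langle\Psi|)$. Finally, $|\Psi\rangle$ is pure with Schmidt coefficients $\boldsymbol\mu$ of length at most $d$, and $\boldsymbol\mu$ majorizes the uniform vector $(1/d,\dots,1/d)$ of $\rho$, so Proposition \ref{prop7} (the Schur concavity forced by Condition \ref{cond3}) gives $C(|\Psi\rangle\langle\Psi|)\le C(\rho)$. Chaining the two inequalities yields $C(\sigma)\le C(\rho)$, hence $\sigma\prec_3\rho$, which completes (i)$\Rightarrow$(ii).

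The main obstacle is exactly the construction of the middle paragraph: realizing an arbitrary, possibly mixed, $\sigma$ as the image of a pure state on the same Hilbert space under a deterministic local operation, together with the dimension bookkeeping that keeps the purifying subspace and the channel inside $\mathcal H_2$. The decisive point is that one purifies the marginal of the party that is \emph{not} acted upon, into the larger factor; once this is arranged, Condition \ref{cond1} governs the correlation-decreasing passage $|\Psi\rangle\to\sigma$, while Condition \ref{cond3}, through Proposition \ref{prop7}, governs the comparison of the pure state $|\Psi\rangle$ with the maximally entangled $\rho$. I would also note that this argument shows why the analogous claim fails for $\prec_1$: without Condition \ref{cond3} one loses the inequality $C(|\Psi\rangle\langle\Psi|)\le C(\rho)$, consistent with Proposition \ref{prop4}.
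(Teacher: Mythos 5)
Your cycle (i)$\Rightarrow$(ii)$\Rightarrow$(iii)$\Rightarrow$(i) matches the paper's, and the pure-state core of your (i)$\Rightarrow$(ii) is sound: purifying $\sigma^{(1)}$ inside $\mathcal H_2$ and mapping $|\Psi\rangle\langle\Psi|$ onto $\sigma$ by a channel acting on party $2$ alone, then invoking Proposition \ref{prop7} for the majorization comparison, is a clean repackaging of the paper's argument (the paper instead purifies $\sigma$ itself into an enlarged space $\mathcal H_2\otimes\mathcal H_2'$, converts the maximally entangled state into that purification by Nielsen's theorem, and finally traces out the ancilla; the two routes rest on the same facts, and yours has the advantage of staying on $\mathcal H$ and using Proposition \ref{prop7} as a black box rather than re-running its proof).

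The genuine gap is that you identify ``maximally entangled on $\mathcal H$'' with ``pure state with uniform Schmidt coefficients.'' In the paper this identification fails whenever $d_2\ge 2d_1$: the states maximizing $E_f$, and by the paper's definition the maximally entangled states, are all the states of the form \eqref{mps}, which can be mixed --- this is precisely the point of Ref.~\cite{LZFFL} (``Mixed maximally entangled states''). This causes two concrete problems. First, in (iii)$\Rightarrow$(i) your deduction ``$\rho$ maximizes $E_f$, hence $\rho$ is pure with uniform Schmidt coefficients'' is invalid; for instance, an equal mixture of two pure maximally entangled states whose party-$2$ supports are orthogonal also maximizes $E_f$. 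What follows from \cite{LZFFL} is only that $\rho$ has the form \eqref{mps}, and one must then argue separately that every such state is maximally entangled --- the paper does this by exhibiting a deterministic local operation taking any state \eqref{mps} to the pure state $|\tilde 1\rangle$. Second, your (i)$\Rightarrow$(ii) covers only pure maximally entangled $\rho$: if $\rho$ is a mixed state of the form \eqref{mps}, Proposition \ref{prop7} does not apply to it, and your final inequality $C(|\Psi\rangle\langle\Psi|)\le C(\rho)$ needs the additional step $C(|\tilde 1\rangle\langle\tilde 1|)\le C(\rho)$, which follows from Condition \ref{cond1} and the same local conversion of \eqref{mps} into $|\tilde 1\rangle$. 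Both holes are patched by that single observation, but as written your proof establishes the equivalence only within the class of pure states, which is strictly smaller than the class the Proposition is about.
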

\begin{proof}
Assume, without loss of generality, that the dimension $d_1$ of 
${\cal H}_1$ is not larger than the dimension $d_2$ of ${\cal H}_2$. 

We first show that (i) implies (ii). Let $\rho'$ be a pure maximally 
entangled state on ${\cal H}={\cal H}_1\otimes{\cal H}_2$, i.e., 
with Schmidt coefficients $1/d_1$. Consider any state $\rho$ on 
${\cal H}$ and denote its eigenvalues by $\mu_p$ and 
the corresponding eigenstates by $| p \rangle$. Let us introduce 
the Hilbert spaces ${\cal H}'_2$ and 
$\tilde {\cal H}_2={\cal H}_2 \otimes {\cal H}'_2$. Provided 
the dimension of ${\cal H}'_2$ is not smaller than $d_1d_2$, 
$\rho$ can be expressed as 
$\rho=\operatorname{tr}_{{\cal H}'_2}| \psi \rangle \langle \psi |$ 
where $| \psi \rangle=\sum_{p=1}^{d_1d_2} \sqrt{\mu_p} 
| p \rangle \otimes | p' \rangle$ with orthonormal states 
$| p' \rangle$ of ${\cal H}'_2$. The state $| \psi \rangle$ can also 
be written as $| \psi \rangle=\sum_{i=1}^r \sqrt{\lambda_i} 
| i \rangle_1 \otimes | i \rangle_2$ where $| i \rangle_1$ and 
$| i \rangle_2$ are orthonormal states of ${\cal H}_1$ and 
$\tilde {\cal H}_2$, respectively, and $r \le d_1$. The vector 
$(\lambda_i)_{i=1}^r$ majorizes $(1/d_1, \ldots,1/d_1)$ 
\cite{MOA}. Thus, $\rho'$ can be changed into the pure state 
$\tilde \rho=| \psi \rangle \langle \psi |$ by local operations and 
one-way classical communication \cite{N}. Consequently, for 
any measure $C$ fulfilling Conditions 1 and 3, 
$C(\tilde \rho) \le C(\rho')$, see the proof of Proposition \ref{prop7}, 
which means $\tilde \rho \prec_3 \rho'$. From the relation
between the orderings $\prec_1$ and $\prec_3$ and the fact 
that $\operatorname{tr}_{{\cal H}'_2}$ is a local operation on 
 $\tilde {\cal H}_2$, it follows that $\rho \prec_3 \tilde \rho$, 
and so $\rho \prec_3 \rho'$. If there are mixed maximally entangled 
states on ${\cal H}$, they can be transformed into pure maximally 
entangled states on ${\cal H}$ by local operations, as shown below, 
and are hence maximally correlated on ${\cal H}$ for $\prec_3$. 

The implication (ii)$\Rightarrow$(iii) directly follows from the fact 
that any two states $\rho$ and $\rho'$ such that $\rho \prec_3 \rho'$ 
necessarily satisfy $\rho \prec_2 \rho'$. 

Consider a maximally correlated state $\rho'$ on ${\cal H}$ for 
$\prec_2$ and a maximally entangled state $\rho$ on ${\cal H}$.
For any correlation monotone $C$ obeying Condition 2, 
$C(\rho) \le C(\rho')$. The entanglement of formation $E_f$ is such 
a measure, and so $E_f(\rho')=E_f(\rho)$. Consequently, $\rho'$ is 
given by eq.\eqref{mps} \cite{LZFFL}, and is thus maximally 
entangled on ${\cal H}$, as shown below.
\end{proof}
A state $\rho'$ is maximally entangled on ${\cal H}$ if and only if 
$E(\rho) \le E(\rho')$ for any entanglement monotone $E$ and 
state $\rho$ on ${\cal H}$. Such states exist for bipartite Hilbert 
spaces. When the dimension $d_1$ of ${\cal H}_1$ is not larger 
than the dimension $d_2$ of ${\cal H}_2$, which can always be 
assumed without loss of generality, they are necessarily of the form 
\begin{equation}
\rho=\sum_{q=1}^Q p_q |\tilde q \rangle \langle \tilde q | 
\text{ with } |\tilde q \rangle=
\sum_{i=1}^{d_1} |i \rangle_1 \otimes |q,i \rangle_2/\sqrt{d_1} ,
\label{mps}
\end{equation} 
where $|i \rangle_1$ and $|q,i \rangle_2$ are orthonormal states 
of ${\cal H}_1$ and ${\cal H}_2$, respectively, and the probabilities 
$p_q$ sum to unity. This follows from the fact that the only 
states that maximise the entanglement of formation on ${\cal H}$ 
are given by eq.\eqref{mps} \cite{LZFFL}. If $d_2 \ge 2d_1$, 
$\rho$ can be mixed. Reciprocally, any state \eqref{mps} is 
maximally entangled on ${\cal H}$ as it can be transformed into 
the pure maximally entangled state $|\tilde 1 \rangle$ by 
the deterministic local operation with Kraus operators 
$I_1 \otimes \sum_{i=1}^{d_1} |1,i \rangle_2 {_2\langle} q,i|$ 
where $q=1, \ldots, Q$ 
and, if $d_2/d_1 \ne Q$, $I_1 \otimes (I_2-\sum_{q=1}^Q \Pi_q)$ 
where $\Pi_q=\sum_{i=1}^{d_1} |q,i \rangle_2 {_2\langle} q,i|$. 
Note that this operation changes a state given by eq.\eqref{mps} 
with any Schmidt coefficients $\lambda_i$, in place of $1/d_1$, 
into a pure state. As a result of Proposition \ref{prop8}, 
the maximally correlated states on ${\cal H}$ for $\prec_2$ and 
$\prec_3$ are also the density operators \eqref{mps}. 

\section{Correlation monotones maximum for partially 
entangled states}\label{Cmmpes}

If a bipartite correlation monotone $C$ is not maximum, on some 
Hilbert space, for the maximally entangled states, then, due to 
Proposition \ref{prop8}, it does not satisfy Condition \ref{cond3}. 
This means that, for at least one state $\rho$, there is a local 
measurement such that $C(\rho_q) > C(\rho)$ for every outcome 
$q$, where $\rho_q$ denotes the postmeasurement states given 
by eq.\eqref{prho}. This is actually the case for all the maximally 
entangled states which do not maximise $C$, as we show now. 
Such a state $\rho$ on ${\cal H}$ is given by eq.\eqref{mps}. 
Assume there are states on ${\cal H}$ for which $C$ is larger 
than $C(\rho)$. Any state is not more correlated, according to 
Condition 1, than a pure state on the same Hilbert space, see 
the proof of Proposition \ref{prop8}. So, there are pure states 
$|\psi\rangle$ of ${\cal H}$ with Schmidt coefficients 
$\lambda_i \ne 1/d_1$ such that 
$C(|\psi\rangle\langle \psi|)=c>C(\rho)$. The local measurement 
with Kraus operators 
$\tilde K_q = \sum_{i=1}^{d_1} \sqrt{\lambda_i} 
|i \rangle_1 {_1\langle} (i+q-1)\mod d_1 +1 | \otimes I_2$ 
where $q=1,\ldots,d_1$, performed on $\rho$, gives 
$C(\rho_q)=c$ for every outcome $q$. 

Interesting examples of bipartite correlation monotones that do 
not obey Condition \ref{cond3} can be defined from Bell inequalities. 
The relation between Bell nonlocality and entanglement is 
not straightforward \cite{W,BCPSW,PRA}. A Bell inequality reads 
$\sum_{s,t,x,y} \beta_{x,y}^{s,t} p(s,t|x,y) \le b$ 
where $\beta_{x,y}^{s,t}$ are real coefficients and $p(s,t|x,y)$ 
is the probability of the outcomes $s$ and $t$ of the local 
measurements $x$ and $y$, respectively. When 
the set of probabilities is Bell local, the left side can approach but 
not exceed $b$. The largest value of this left side, for a given 
state $\rho$, is $$B (\rho) = \sup_{( F^{(n)}_{s|x} )_{n,s,x}} 
\sum_{s,t,x,y} \beta_{x,y}^{s,t} \operatorname{tr} 
\big(\rho F^{(1)}_{s|x} \otimes F^{(2)}_{t|y}\big) ,$$
where the supremum is taken over the positive operators 
$F^{(n)}_{s|x}$ such that $\sum_s F^{(n)}_{s|x}=I_n$. 
The measure $B$ satisfies Condition \ref{cond1} since, 
for any complete set of Kraus operators 
$K_q : {\cal H}_n \rightarrow {\cal H}'_n$ 
and positive operators $F'_s$ on ${\cal H}'_n$ summing to 
the corresponding identity operator, the operators 
$F_s=\sum_q K_q^\dag F'_s K_q$ are positive and sum to $I_n$. 
If $B (\rho)>b$, the Bell inequality can be violated by $\rho$. 
For some Bell inequalities, $B$ is maximum for partially entangled 
states \cite{BCPSW} and so does not fulfill Condition \ref{cond3}. 
It follows from the above discussion that a maximal violation of 
the Bell inequality can nevertheless be achieved with a maximally 
entangled state $\rho$ by first performing a local measurement 
on it, and for every outcome of the measurement. Note that 
this does not even suppose that $B (\rho)>b$. In the context 
of Bell nonlocality, performing a local measurement and selecting 
an outcome is termed filtering.

\section{Influence of the Hilbert space dimensions}\label{IHsd}

For more than two systems, the existence or not of maximally 
correlated states for $\prec_2$ or $\prec_3$ depends on 
the dimensions of the Hilbert spaces ${\cal H}_n$. The two 
following Propositions can be proved.
\begin{prop}\label{prop9}
Consider any $N$ Hilbert spaces ${\cal H}_n$ of respective 
dimensions $d_n$ and denote by ${\cal F}$ the set of 
the nonempty subsets ${\cal E}$ of $\{1,\ldots,N\}$ such 
that $\Pi_{n\in {\cal E}} d_n \le \sqrt{d}$ where 
$d=\Pi_{n=1}^N d_n$.

If $\rho$ is a maximally correlated state on 
$\otimes_{n=1}^N {\cal H}_n$ for $\prec_2$ or $\prec_3$, 
then, for any ${\cal E} \in {\cal F}$, 
$\operatorname{tr}_{\otimes_{n\notin {\cal E}} {\cal H}_n} \rho$ 
is the maximally mixed state on 
$\otimes_{n\in {\cal E}} {\cal H}_n$. 

If, moreover, 
$\max_{{\cal E} \in {\cal F}} \Pi_{n\in {\cal E}} d_n>\sqrt{d/2}$, 
then $\rho$ is pure.
\end{prop}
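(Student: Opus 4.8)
The plan is to reduce, for each relevant bipartition, to the already-established classification \eqref{mps} of bipartite states that maximize the entanglement of formation. Fix any ${\cal E}\in{\cal F}$ and group the $N$ systems into the two parties $A=\otimes_{n\in{\cal E}}{\cal H}_n$ and $B=\otimes_{n\notin{\cal E}}{\cal H}_n$, of dimensions $d_A=\prod_{n\in{\cal E}}d_n$ and $d_B=d/d_A$. The defining property ${\cal E}\in{\cal F}$ gives $d_A\le\sqrt d$, hence $d_B\ge\sqrt d\ge d_A$, so $A$ is the smaller party. I would then introduce the measure $C_{\cal E}(\sigma)=E_f^{A|B}(\sigma)$, the entanglement of formation of $\sigma$ read across the cut $A|B$, and check that it is a correlation monotone obeying Condition~\ref{cond2} (hence Condition~\ref{cond3}). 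The key observation is that any deterministic local operation, or any local measurement, acting on a single one of the $N$ systems $n$, in the sense of eq.\eqref{Kq}, is, when $n\in{\cal E}$, a local operation (measurement) on party $A$ tensored with the identity on $B$, and symmetrically when $n\notin{\cal E}$. Since $E_f^{A|B}$ is an entanglement monotone for the bipartition $A|B$, it is nonincreasing under such deterministic operations and nonincreasing on average under such measurements, so $C_{\cal E}$ obeys Conditions~\ref{cond1} and \ref{cond2}.

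Because $\rho$ is maximally correlated for $\prec_2$ (a fortiori for $\prec_3$), it maximizes every correlation monotone obeying Condition~\ref{cond2}, in particular $C_{\cal E}$. Hence $\rho$ maximizes $E_f^{A|B}$ on ${\cal H}$, and since $d_A\le d_B$ it must be of the form \eqref{mps} for the cut $A|B$. A direct computation of the reduced state of such a state on the smaller party gives $\operatorname{tr}_B\rho=I_A/d_A$, the maximally mixed state on $\otimes_{n\in{\cal E}}{\cal H}_n$, which is the first assertion.

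For the second assertion, let ${\cal E}^\ast$ attain $\max_{{\cal E}\in{\cal F}}\prod_{n\in{\cal E}}d_n$, so that $d_{A^\ast}>\sqrt{d/2}$ while still $d_{A^\ast}\le\sqrt d$. Applying the first part across $A^\ast|B^\ast$, the state $\rho$ is a mixture of $Q$ mutually orthogonal states $|\tilde q\rangle$ of the form \eqref{mps}, whose Schmidt expansions in $B^\ast$ employ $Q\,d_{A^\ast}$ mutually orthonormal states of $B^\ast$. This forces $Q\,d_{A^\ast}\le d_{B^\ast}$, i.e.\ $Q\le d_{B^\ast}/d_{A^\ast}=d/d_{A^\ast}^2$. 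The hypothesis $d_{A^\ast}^2>d/2$ then yields $Q<2$, so $Q=1$ and $\rho=|\tilde 1\rangle\langle\tilde 1|$ is pure.

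The main obstacle is the verification in the first paragraph that $C_{\cal E}=E_f^{A|B}$ genuinely qualifies as an $N$-partite correlation monotone satisfying Condition~\ref{cond2}: one must confirm that the single-system locality of the Kraus operators in eq.\eqref{Kq} is correctly subsumed into the coarse-grained $A|B$ locality, and that grouping several of the ${\cal H}_n$ into one party does not spoil the monotone properties of $E_f$. Everything else is a direct appeal to the classification \eqref{mps} and the orthogonality count of the last paragraph.
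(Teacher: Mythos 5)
Your proposal is correct and follows essentially the same route as the paper: define $C_{\cal E}$ as the entanglement of formation across the cut $\otimes_{n\in{\cal E}}{\cal H}_n \,|\, \otimes_{n\notin{\cal E}}{\cal H}_n$, observe that the single-system Kraus operators of eq.\eqref{Kq} are local with respect to this coarse-grained bipartition so that $C_{\cal E}$ is a correlation monotone obeying Condition~\ref{cond2}, and then invoke the classification \eqref{mps} of the states maximizing $E_f$ to get both the maximally mixed reduction and, via the orthogonality count $Q\,d_{A}\le d_{B}$, the purity when $d_A^2>d/2$. The paper's proof is the same argument, merely stating the purity step by reference to the earlier discussion of eq.\eqref{mps} rather than spelling out the dimension count as you do.
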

\begin{proof}
Consider any $N$ Hilbert spaces ${\cal H}_n$ and any nonempty 
set ${\cal E}\subset \{1,\ldots,N\}$, and define the measure 
$C_{\cal E}$ as the entanglement of formation $E_f$ for 
the bipartition 
${\cal H}={\cal H}^{\in}\otimes{\cal H}^{\notin}$ 
where ${\cal H}=\otimes_n {\cal H}_n$, 
${\cal H}^{\in}=\otimes_{n\in {\cal E}} {\cal H}_n$ 
and ${\cal H}^{\notin}=\otimes_{n\notin {\cal E}} {\cal H}_n$. 
It is a correlation monotone fulfilling Condition \ref{cond2}, 
since $E_f$ is, and operators \eqref{Kq} can be rewritten as 
$\tilde K_q=I^{\in} \otimes K_q$ 
or $\tilde K_q=K_q \otimes I^{\notin}$ where $K_q$ acts on 
${\cal H}^{\notin}$ or ${\cal H}^{\in}$, respectively, and 
$I^{\in}$ and $I^{\notin}$ are the identity operators on 
${\cal H}^{\in}$ and ${\cal H}^{\notin}$, respectively. 
Assume there is a maximally correlated state $\rho$ on ${\cal H}$ 
for $\prec_2$ or $\prec_3$. It maximizes all the monotones 
$C_{\cal E}$ on ${\cal H}$. If ${\cal E}\in{\cal F}$, 
the dimension $d^{\in}$ of ${\cal H}^{\in}$ is not larger than 
the dimension $d^{\notin}=d/d^{\in}$ of ${\cal H}^{\notin}$. 
Thus, $\rho$ is given by eq.\eqref{mps} with $d_1$ replaced 
by $d^{\in}$, the states $|i\rangle_1$ by orthonormal 
states of ${\cal H}^{\in}$, and the states $|q,i\rangle_2$ by 
orthonormal states of ${\cal H}^{\notin}$ \cite{LZFFL}. 
Consequently, $\operatorname{tr}_{{\cal H}^{\notin}} \rho$ 
is the maximally mixed state on ${\cal H}^{\in}$. Moreover, 
$\rho$ is necessarily pure if $d^{\notin} <2 d^{\in}$. 
\end{proof}
For identical dimensions $d_n$, the pure states with the property 
stated in the above Proposition are known as absolutely maximally 
entangled states \cite{HS,S,HCLRL,HGS}. Furthermore, in this case, 
the above condition on the dimensions is always satisfied for 
even $N$ and never for odd $N$. For $d_n=2$, absolutely maximally 
entangled states do not exist when $N=4$ \cite{HS} and $N \ge 7$ 
\cite{S,HGS}. So, due to Proposition \ref{prop9}, for $N$ two-level 
systems, there is no maximally correlated state neither for $\prec_2$ 
nor for $\prec_3$ when $N$ is even and different from $6$. 
Using the results of Ref.\cite{MV}, such states can be shown to exist 
on the Hilbert spaces considered in the Proposition below.
\begin{prop}
Let ${\cal H}_n$ be any $N$ Hilbert spaces of respective 
dimensions $d_n$. If $d_N \ge d$ where 
$d=\Pi_{n=1}^{N-1} d_n$, then there are maximally correlated 
states on ${\cal H}=\otimes_{n=1}^N {\cal H}_n$ for $\prec_2$ 
and $\prec_3$, which are also maximally entangled on ${\cal H}$ 
and are of the form
$$\sum_{i_1, \ldots, i_{N-1} =1}^{d_1, \ldots, d_{N-1}} 
\otimes_{n=1}^{N-1} |i_n \rangle_n \otimes 
| i_1, \ldots, i_{N-1} \rangle_N/\sqrt{d} , $$
where $|i \rangle_n$ and $| i, j, \ldots \rangle_N$ are orthonormal 
states of ${\cal H}_n$ with $n<N$ and ${\cal H}_N$, respectively.
\end{prop}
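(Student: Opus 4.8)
The plan is to verify directly that the proposed state, call it
$\rho'=|\Phi\rangle\langle\Phi|$ with
$|\Phi\rangle=\sum_{i_1,\ldots,i_{N-1}}
\otimes_{n=1}^{N-1}|i_n\rangle_n\otimes|i_1,\ldots,i_{N-1}\rangle_N/\sqrt d$,
is maximally entangled on ${\cal H}$ and then invoke Proposition \ref{prop8}
to conclude it is maximally correlated for $\prec_2$ and $\prec_3$.
First I would observe that $|\Phi\rangle$ is a pure state in which the map
$i_1,\ldots,i_{N-1}\mapsto|i_1,\ldots,i_{N-1}\rangle_N$ embeds the full
product basis of $\otimes_{n<N}{\cal H}_n$ isometrically into ${\cal H}_N$,
which is possible precisely because $d_N\ge d$. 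Thus system $N$ holds a
faithful copy of the first $N-1$ systems, and $|\Phi\rangle$ is an
absolutely maximally entangled state across any bipartition that separates
$N$ from the others: its reduced state on $\otimes_{n<N}{\cal H}_n$ is the
maximally mixed state $I/d$, while the reduced state on ${\cal H}_N$ has
support of dimension $d$ with all eigenvalues $1/d$.

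The key step is to show maximal entanglement with respect to every
entanglement monotone and every partition. For this I would appeal to the
results of Ref.\ \cite{MV}, which is cited precisely for this purpose; the
construction there guarantees that the displayed state maximizes the relevant
bipartite entanglement simultaneously. Concretely, for any bipartition
${\cal H}={\cal H}^{\in}\otimes{\cal H}^{\notin}$ with
$d^{\in}\le d^{\notin}$, I would check that $\rho'$ has reduced state
proportional to the identity on the smaller factor, so that it is a maximally
entangled pure state for that cut in the sense of eq.\eqref{mps}. Since a
pure state of the form \eqref{mps} maximizes the entanglement of formation
$E_f$ for the corresponding bipartition, and $E_f$ is a correlation monotone,
$\rho'$ maximizes $C$ for every correlation monotone $C$ obeying Condition
\ref{cond2} that is built from a bipartite $E_f$; the argument of Proposition
\ref{prop8}, applied cut by cut, then upgrades this to maximality for
$\prec_2$ and $\prec_3$.

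Having established that $\rho'$ is maximally entangled on ${\cal H}$, the
final implication is immediate: by the chain of equivalences in Proposition
\ref{prop8}, extended to the multipartite setting via the bipartite monotones
$C_{\cal E}$ introduced in the proof of Proposition \ref{prop9}, a maximally
entangled state is maximally correlated for both $\prec_2$ and $\prec_3$.
The main obstacle I anticipate is the step that promotes bipartite maximality
across each individual cut to genuine multipartite maximality for the
preorders $\prec_2$ and $\prec_3$: one must argue that no correlation monotone
satisfying Condition \ref{cond3} can distinguish $\rho'$ from an even more
correlated state, and this is exactly where the dimension hypothesis
$d_N\ge d$ is indispensable, since it is what makes the full product structure
embeddable in system $N$ and thereby places $\rho'$ at the top of the ordering
for every relevant bipartition at once. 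The remaining verifications, that the
reduced states are maximally mixed and that the transformation to the pure
maximally entangled form is a deterministic local operation, are routine given
the explicit form of $|\Phi\rangle$.
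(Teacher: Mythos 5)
There is a genuine gap, and it sits exactly where you flagged your ``main obstacle'': the step that promotes maximality across individual bipartite cuts to maximality for $\prec_2$ and $\prec_3$ does not exist in the paper and cannot be obtained by applying Proposition \ref{prop8} ``cut by cut.'' Proposition \ref{prop8} is stated and proved only for bipartite Hilbert spaces, and its proof of (i)$\Rightarrow$(ii) rests on Nielsen's majorization theorem: on ${\cal H}_1\otimes{\cal H}_2$, \emph{any} state can be reached from a pure maximally entangled state by local operations and one-way classical communication in which every branch yields the target. No analogue of this follows from checking reduced states of $|\Phi\rangle$. The difficulty is that an $N$-partite correlation monotone is only required to be monotone under operations local with respect to the \emph{fine-grained} partition, so the family of such monotones is strictly larger than the family of bipartite monotones $C_{\cal E}$ associated with coarse-grained cuts (those used in the proof of Proposition \ref{prop9}). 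Maximizing every $C_{\cal E}$ therefore yields necessary conditions — this is precisely how Proposition \ref{prop9} derives its non-existence results — but it does not give $C(\rho)\le C(\rho')$ for every $C$ obeying Conditions \ref{cond1} and \ref{cond3}. Indeed, if your implication ``maximally mixed on the smaller side of every cut $\Rightarrow$ maximally correlated'' were valid, it would immediately make every absolutely maximally entangled state (e.g.\ for six qubits) maximally correlated, a statement the paper conspicuously does not make.

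What is actually needed is an operational fact, and this is the real content of the citation to Ref.~\cite{MV}, which you invoke but for the wrong statement: when $d_N\ge d$, \emph{any} pure state $|\psi\rangle$ of the $N$-partite system (with system $N$ suitably enlarged) can be obtained from the uniform state $|\tilde\psi'\rangle$ by a \emph{single local measurement on system $N$} such that every outcome $q$, followed by local unitaries $\otimes_{n<N}U_q^{(n)}$, yields $|\psi\rangle$ exactly. This ``every outcome succeeds'' convertibility is the multipartite replacement for Nielsen's theorem, and it is what the hypothesis $d_N\ge d$ buys. The paper's proof then runs: purify an arbitrary $\rho$ into an enlarged system $N$ (partial trace is local, so $\rho\prec_3\tilde\rho$); use the Miyake--Verstraete conversion together with Condition \ref{cond3} and invariance under local unitaries to get $C(\tilde\rho)=C(\tilde\rho'_q)$ for every $q$ and hence $C(\tilde\rho)\le C(\tilde\rho')$; identify $C(\tilde\rho')=C(\rho')$ via Proposition \ref{prop1}; conclude $\rho\prec_3\rho'$ and, since $\prec_3$ implies $\prec_2$, also $\rho\prec_2\rho'$. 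Your outline never establishes this convertibility (or any substitute for it), so the conclusion does not follow from the steps you give; the reduced-state computations and the appeal to Proposition \ref{prop8} are correct but insufficient.
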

\begin{proof}
Consider any state $\rho$ on ${\cal H}
=\otimes_{n=1}^N {\cal H}_n$. 
Let us introduce the Hilbert spaces ${\cal H}'_N$ and 
$\tilde {\cal H}_N={\cal H}_N \otimes {\cal H}'_N$. Provided 
the dimension of ${\cal H}'_N$ is not smaller than that of ${\cal H}$, 
$\rho$ can be expressed as 
$\rho=\operatorname{tr}_{{\cal H}'_N}| \psi \rangle \langle \psi |$, 
where $| \psi \rangle 
\in \otimes_{n=1}^{N-1} {\cal H}_n \otimes \tilde {\cal H}_N$, 
see the proof of Proposition 8. The state $| \psi \rangle$ can always 
be written as
$$| \psi \rangle
=\sum_{i_1, \ldots, i_{N-1} =1}^{d_1, \ldots, d_{N-1}} 
\lambda_{ i_1, \ldots, i_{N-1}} 
\otimes_{n=1}^{N-1} |i_n \rangle_n \otimes 
| i_1, \ldots, i_{N-1} \rangle_N , $$
where $|i \rangle_n$ are orthonormal states of ${\cal H}_n$, 
and $| i, j, \ldots \rangle_N$ are normalized states of 
$\tilde {\cal H}_N$. We denote by $\tilde {\cal H}'_N$ a subspace 
of $\tilde {\cal H}_N$ of dimension $d=\Pi_{n=1}^{N-1}d_n$ that 
contains the states $| i, j, \ldots \rangle_N$. From the relation 
between the orderings $\prec_1$ and $\prec_3$ and the fact 
that $\operatorname{tr}_{{\cal H}'_N}$ is a local operation on 
$\tilde {\cal H}_N$, it follows that $\rho \prec_3 \tilde \rho$ 
where $\tilde \rho=| \psi \rangle \langle \psi |$. 

Consider $\rho'=| \psi' \rangle \langle \psi' |$  and 
$\tilde \rho'=| \tilde \psi' \rangle \langle \tilde \psi' |$ where 
$|\psi'\rangle$ and $|\tilde \psi'\rangle$ are given by the above 
expression with $\lambda_{ i_1, \ldots, i_{N-1}}=1/\sqrt{d}$ 
and orthonormal states $| i, j, \ldots \rangle_N$ of ${\cal H}_N$ 
and $\tilde {\cal H}'_N$, respectively. The states $\rho'$ and 
$\tilde \rho'$ are equally correlated for $\prec_1$, and so also 
for $\prec_3$, see Proposition \ref{prop1}. There is a set of Kraus 
operators $M_q = \otimes_{n=1}^{N-1} U_q^{(n)} \otimes K_q$ 
where $U_q^{(n)}$ denotes a unitary operator on ${\cal H}_n$ 
and the linear maps 
$K_q:\tilde{\cal H}'_N\rightarrow\tilde{\cal H}'_N$ are such that 
$\sum_q K_q^\dag K_q=\tilde I_N$ with $\tilde I_N$ the identity 
operator on $\tilde{\cal H}'_N$ for which 
$| \psi \rangle=\sqrt{d}M_q |\tilde \psi' \rangle$ for any $q$ 
\cite{MV}. This equality can be rewritten as $\tilde \rho=U_q 
\otimes \tilde I_N \tilde \rho'_q U_q^\dag \otimes \tilde I_N$ 
where $U_q=\otimes_{n=1}^{N-1} U_q^{(n)}$ 
and $\tilde \rho'_q$ is given by eq.\eqref{prho} with $t_q=1$, 
$\tilde K_{q,1}=I_N^< \otimes K_q$, and $\tilde \rho'$ in place 
of $\rho$. Thus, for any correlation monotone fulfilling 
Condition \ref{cond3}, $C(\tilde \rho)=C(\tilde \rho'_q)$ for 
any $q$, and so $C(\tilde \rho) \le C(\tilde \rho')$. This means 
$\tilde \rho \prec_3 \tilde \rho'$, and hence $\rho \prec_3 \rho'$ 
and, due to the relation between the orderings $\prec_3$ and 
$\prec_2$, $\rho \prec_2 \rho'$.
\end{proof}

\section{Conclusion}\label{C}

In summary, we have shown that there is no maximally 
correlated states when the measures of total correlations are only 
required to be correlation monotones. In contrast, there exist 
such states if, in addition, they do not increase with probability 
unity under local measurements. In this case, there are always 
maximally correlated states for two systems, which are 
the corresponding maximally entangled states. For a larger 
number of systems, two opposite examples have been considered. 
We have furthermore seen that a correlation monotone fulfilling 
the above mentioned condition necessarily coincides 
with an entanglement monotone for pure bipartite states. 
Bell inequalities maximally violated by partially 
entangled states have also been discussed. Correlation monotones 
that can increase with probability unity under local measurements 
have been defined from them. It would be of interest to determine 
whether the existence of maximally correlated states can follow 
from a condition weaker than that considered here.


\begin{thebibliography}{99}

\bibitem{HHHH} R. Horodecki, P. Horodecki, M. Horodecki, and 
K. Horodecki, Quantum entanglement, Rev. Mod. Phys. {\bf 81}, 
865 (2009).

 \bibitem{BBPSSW} C.H. Bennett, G. Brassard, S. Popescu, 
B. Schumacher, J. Smolin, and W. K. Wootters, Purification of 
Noisy Entanglement and Faithful Teleportation via Noisy Channels, 
Phys. Rev. Lett {\bf 76}, 722 (1996).

\bibitem{BDSW} C.H. Bennett, D. P. DiVincenzo, J. Smolin, and 
W. K.Wootters, Mixed State Entanglement and Quantum Error 
Correction, Phys. Rev. A {\bf 54}, 3824 (1996).

\bibitem{PV} M. B. Plenio and S. Virmani, An introduction to 
entanglement measures, Quantum Inf. Comput. {\bf 7}, 1 
(2007). 

\bibitem{ZHSL} K. \. Zyczkowski, P. Horodecki, A. Sanpera and 
M. Lewenstein, Volume of the set of separable states, Phys. 
Rev. A {\bf 58}, 883 (1998).

\bibitem{ViWe} G. Vidal and R.F. Werner, Computable 
measure of entanglement, Phys. Rev. A {\bf 65}, 032314 
(2002).

\bibitem{VPRK} V. Vedral, M. B. Plenio, M. A. Rippin, and 
P. L. Knight, Quantifying Entanglement, Phys. Rev. Lett {\bf 78}, 
2275 (1997).

\bibitem{V} G. Vidal, Entanglement monotones, J. Mod. Opt. 
{\bf 47}, 355 (2000).

\bibitem{GMNSH} G. Gour, M. P. M\" uller, 
V. Narasimhachar, R. W. Spekkens, and 
N. Y. Halpern, The resource theory of 
informational nonequilibrium in thermodynamics, 
Phys. Rep. {\bf 583}, 1 (2015).

\bibitem{BCP} T. Baumgratz, M. Cramer, 
and M.B. Plenio, Quantifying Coherence, 
Phys. Rev. Lett. {\bf 113}, 140401 (2014).

\bibitem{NBCPJA} C. Napoli, T. R. Bromley, M. Cianciaruso, 
M. Piani, N. Johnston, G. Adesso, Robustness of coherence: 
An operational and observable measure of quantum coherence, 
Phys. Rev. Lett. {\bf 116}, 150502 (2016).

\bibitem{PCBNJ} M. Piani, M. Cianciaruso, T. R. Bromley, 
C. Napoli, N. Johnston, G. Adesso, Robustness of asymmetry 
and coherence of quantum states, 
Phys. Rev. A {\bf 93}, 042107 (2016).

\bibitem{LZYDL} C. L. Liu, D.-J. Zhang, X.-D. Yu, Q.-M. Ding, 
L. Liu, A new coherence measure based on fidelity, 
Quantum Inf. Process. {\bf 16}, 198 (2017).

\bibitem{BSFPW} K. Bu, U. Singh, S.-M. Fei, A. K. Pati, J. Wu, 
Max-relative entropy of coherence: an operational coherence 
measure, Phys. Rev. Lett. {\bf 119}, 150405 (2017).

\bibitem{KPTAJ} H. Kwon, C.-Y. Park, K. C. Tan, D. Ahn, 
H. Jeong, Coherence, Asymmetry, and Quantum Macroscopicity, 
Phys. Rev. A {\bf 97}, 012326 (2018).

\bibitem{K} N. K. Kollas, Optimization free measures of quantum 
resources, Phys. Rev. A {\bf 97}, 062344 (2018).

\bibitem{W} R.F. Werner, Quantum states with 
Einstein-Podolsky-Rosen correlations admitting a hidden-variable 
model, Phys. Rev. A {\bf 40}, 4277 (1989).

\bibitem{LZFFL} Z.-G. Li, M.-J. Zhao, S.-M. Fei, H. Fan, and 
W. M. Liu, Mixed maximally entangled states, Quantum Inf. 
Comput. {\bf 12}, 0063 (2012).

\bibitem{qp} S. Camalet, Internal entanglement and external 
correlations of any form limit each other, Phys. Rev. Lett. 
{\bf 121}, 060504 (2018).

\bibitem{MBCPV} K. Modi, A. Brodutch, H. Cable, T. Paterek, 
and V. Vedral, The classical-quantum boundary for correlations: 
Discord and related measures, Rev. Mod. Phys. {\bf 84}, 
1655 (2012). 

\bibitem{HV} L. Henderson and V. Vedral, Classical, quantum 
and total correlations, J. Phys. A {\bf 34}, 6899 (2001).

\bibitem{BM} A. Brodutch and K. Modi, Criteria for measures 
of quantum correlations, Quantum Inf. Comput. {\bf 12}, 0721
 (2012). 

\bibitem{MPSVW} K. Modi, T. Paterek, W. Son, V. Vedral, 
M. Williamson, Unified View of Quantum and Classical 
Correlations, Phys. Rev. Lett. {\bf 104}, 080501 (2010).

\bibitem{ACMBMAV} K. M. R. Audenaert, J. Calsamiglia,
 R. Munoz-Tapia, E. Bagan, L. Masanes, A. Acin, and 
F. Verstraete, Discriminating States: The Quantum Chernoff 
Bound, Phys. Rev. Lett. {\bf 98}, 160501 (2007).

\bibitem{BCLA} T. R. Bromley, M. Cianciaruso, R. Lo Franco, 
G. Adesso, Unifying approach to the quantification of bipartite 
correlations by Bures distance,  J. Phys. A: Math. Theor. 
{\bf 47}, 405302 (2014).

\bibitem{NC} M. A. Nielsen and I. L. Chuang, {\it Quantum 
Computation and Quantum Information} (Cambridge University 
Press, 2000).

\bibitem{PRL} S. Camalet, Monogamy inequality for any local 
quantum resource and entanglement, Phys. Rev. Lett. {\bf 119}, 
110503 (2017).

\bibitem{MOA} A. W. Marshall, I. Olkin, and B.C. Arnold, 
{\it Inequalities: Theory of Majorization and its Applications}, 
Second edition, Springer Series in Statistics (Springer, New York, 
2011). 

\bibitem{N} M.A. Nielsen, Conditions for a Class of Entanglement 
Transformations, Phys. Rev. Lett. {\bf 83}, 436 (1999).

\bibitem{BCPSW} N. Brunner, D.Cavalcanti, S. Pironio, 
V. Scarani, and S. Wehner, Bell nonlocality, Rev. Mod. Phys. 
{\bf 86}, 419 (2014).

\bibitem{PRA} S. Camalet, Measure-independent anomaly of 
nonlocality, Phys. Rev. A {\bf 96}, 052332 (2017).

\bibitem{HS} A. Higuchi and A. Sudbery, How entangled can 
two couples get ?, Phys. Lett. A {\bf 273}, 213 (2000).

\bibitem{S} A. J. Scott, Multipartite entanglement, 
quantum-error-correcting codes, and entangling power 
of quantum evolutions, Phys. Rev. A {\bf 69}, 052330 (2004).

\bibitem{HCLRL} W. Helwig, W. Cui, J. I. Latorre, A. Riera, 
and H.-K. Lo, Absolute maximal entanglement and quantum
secret sharing, Phys. Rev. A {\bf 86}, 052335 (2012).

\bibitem{HGS} F. Huber, O. G\"uhne, and J. Siewert, Absolutely 
maximally entangled states of seven qubits do not exist, 
Phys. Rev. Lett. {\bf 118}, 200502 (2017).

\bibitem{MV} A. Miyake and F. Verstraete, Multipartite 
entanglement in 2 x 2 x n quantum systems, Phys. Rev. A 
{\bf 69}, 012101 (2004).

\end{thebibliography}
\end{document}